\documentclass[11pt]{article}

\usepackage[a4paper,margin=1in]{geometry}
\usepackage{setspace}
\usepackage{newtxtext}

\usepackage[authoryear]{natbib}

\usepackage{float}
\usepackage{array}
\renewcommand{\arraystretch}{1.2}
\usepackage{booktabs}
\usepackage[font=small]{caption}

\usepackage{amsmath,amssymb,amsthm}
\usepackage{authblk}

\usepackage{graphicx}

\usepackage{enumitem}

\usepackage{microtype}

\usepackage{csquotes}

\usepackage{xurl}

\usepackage[
  colorlinks=true,
  linkcolor=blue,
  citecolor=blue,
  urlcolor=blue
]{hyperref}

\usepackage[nameinlink,noabbrev]{cleveref}

\usepackage{tikz}
\usepackage{pgfplots}
\pgfplotsset{compat=1.18}
\usepgfplotslibrary{statistics}
\usepgfplotslibrary{groupplots}

\pgfmathdeclarefunction{Bg}{1}{%
  \pgfmathparse{exp((1 + 1/#1)*ln(1 + #1))}%
}

\pgfmathdeclarefunction{tTwo}{1}{%
  \pgfmathparse{%
    (abs(#1) < 1e-8) ? (-5 + 2*exp(1)) : %
    ((-5 - 2*(#1) + 2*Bg(#1)) / (1 + 2*(#1)))%
  }%
}

\pgfmathdeclarefunction{tThree}{1}{%
  \pgfmathparse{%
    (abs(#1) < 1e-8) ? (-9 + 4/(-5 + 2*exp(1))) : %
    ((49 + 41*(#1) + 6*(#1)^2 - 6*Bg(#1)*(3 + (#1))) / ((1 + 3*(#1)) * (-5 - 2*(#1) + 2*Bg(#1))))%
  }%
}

\newcommand{\Kscale}{1} 

\pgfmathdeclarefunction{Pg}{5}{%
  \pgfmathparse{(#2 + #3*#1)/(#4 + #5*#1)}%
}

\pgfmathdeclarefunction{Qg}{5}{%
  \pgfmathparse{%
    (abs(#3) < 1e-8) ? %
      (\Kscale*(#4 + #5*#1)*exp((#5/#2)*#1)*pow(1-#1,-((#2-#4-#5)/#2))) : %
      (\Kscale*(#4 + #5*#1)*
       pow(1-#1,-((#2+#3-#4-#5)/(#2+#3)))*
       pow(#2 + #3*#1,-((#2*#3-#2*#5+#3*#3+#3*#4)/(#3*(#2+#3)))))%
  }%
}

\pgfmathdeclarefunction{QratioG}{5}{%
  \pgfmathparse{%
    (abs(#3) < 1e-8) ? %
      (#5/(#4 + #5*#1) + #5/#2 + ((#2-#4-#5)/#2)/(1-#1)) : %
      (#5/(#4 + #5*#1)
       + ((#2+#3-#4-#5)/(#2+#3))/(1-#1)
       - ((#2*#3-#2*#5+#3*#3+#3*#4)/((#2+#3)*(#2 + #3*#1))))%
  }%
}

\pgfmathdeclarefunction{Mg}{5}{%
  \pgfmathparse{(Pg(#1,#2,#3,#4,#5)-1)*Qg(#1,#2,#3,#4,#5)}%
}

\pgfmathdeclarefunction{Hg}{5}{%
  \pgfmathparse{1/((1-#1)*QratioG(#1,#2,#3,#4,#5)*Qg(#1,#2,#3,#4,#5))}%
}

\theoremstyle{definition}
\newtheorem{definition}{Definition}[section]
\newtheorem{remark}{Remark}[section]

\theoremstyle{definition}

\newtheorem{proposition}{Proposition}[section]
\newtheorem{theorem}{Theorem}[section]

\newcommand{\Q}{Q}

\newcommand{\muval}{1} 
 
\pgfmathdeclarefunction{Qx}{2}{%
  \pgfmathparse{ (abs(#2) < 1e-8) ? (\muval*#1*exp(#1)) : (\muval*(1+#2)*#1*pow(1+#2*#1,(1-#2)/#2)) }%
}
\pgfmathdeclarefunction{Mx}{2}{%
  \pgfmathparse{ (abs(#2) < 1e-8) ? (\muval*(1-#1)*exp(#1)) : (\muval*(1-#1)*pow(1+#2*#1,(1-#2)/#2)) }%
}
\pgfmathdeclarefunction{Px}{2}{%
  \pgfmathparse{ (1+#2*#1)/((1+#2)*#1) }%
}
\pgfmathdeclarefunction{Qxprime}{2}{%
  \pgfmathparse{ (abs(#2) < 1e-8) ? (\muval*(1+#1)*exp(#1)) : (\muval*(1+#2)*(1+#1)*pow(1+#2*#1,(1-2*#2)/#2)) }%
}
\pgfmathdeclarefunction{Hx}{2}{%
  \pgfmathparse{ 1/((1-#1)*Qxprime(#1,#2)) }%
}
 
\pgfmathdeclarefunction{Qz}{2}{%
  \pgfmathparse{ (abs(#2) < 1e-8) ? (-ln(1-#1)+#1) : (-ln(1-#1) + ((1-#2)/#2)*(ln(1+#2)-ln(1+#2*(1-#1)))) }%
}
\pgfmathdeclarefunction{Hz}{2}{%
  \pgfmathparse{ (1+#2*(1-#1))/(2-#1) }%
}
\pgfmathdeclarefunction{Mz}{2}{%
  \pgfmathparse{ (abs(#2) < 1e-8) ? ((3-#1)/2) : (1/#2 - ((1-#2)/(#2*#2*(1-#1)))*ln(1+#2*(1-#1))) }%
}
\pgfmathdeclarefunction{Pz}{2}{%
  \pgfmathparse{ 1 + Mz(#1,#2)/Qz(#1,#2) }%
}

\title{A Family of Quantile Functions Useful in Clinical Studies}
\author[1]{Sankaran P.\,G.}
\author[2]{Prasanth V.\,P.\thanks{Corresponding author: \url{prasanth.stat@gmail.com}}}
\author[2]{Midhu N.\,N.}
 
\affil[1]{Department of Statistics, Cochin University of Science and Technology, Kochi, India}
\affil[2]{Biostatistics, IQVIA, Kochi, India}
\date{}

\begin{document}
\maketitle

\begin{abstract}
Motivated by upper-tail quantile-domain summaries, we study the quantile-based effectiveness persistence function defined as the ratio between the tail mean and the quantile function. We derive statistical properties of this measure and consider a rational (M\"obius) specification of the quantile-based effectiveness persistence function. Under natural boundary conditions, this specification reduces to a canonical form. The resulting canonical family defines a two-parameter class of non-negative distributions through its quantile function. Various properties, including descriptive measures, L-moments, and quantile-based reliability concepts, are derived for this class. Estimation of the model parameters using maximum likelihood is also developed. The proposed family is illustrated using a real survival dataset.
\end{abstract}

\noindent\textbf{Keywords:} Quantile function; quantile-based reliability; hazard--quantile function; mean residual quantile function; L-moments.

\makeatletter
\renewenvironment{proof}[1][Proof]{%
  \par\noindent
  \textbf{#1: }\normalfont
}{\hfill$\square$\par}
\makeatother

\noindent
\section{Introduction}
\label{sec:intro}

There are several measures to characterize heavy-tail behavior of probability models in the literature \citep{ElAdlouni2008, Papalexiou2013}. Gini index and upper tail ratio are two common measures used to study the upper tail of statistical distributions. The probability of events larger than the observed one is higher for heavy-tailed distributions. In hydrology and climate studies, upper-tail behavior is important because the majority of risk-reduction measures are based on the probability of extreme events \citep{Malamud2004}. In medical research, the comparison of upper-tail clinical performance, which measures the extent to which very high responses persist, is decisive in therapy evaluation. Several measures have been proposed to describe upper-tail phenomena in lifetime studies, including mean residual life, tail mean (or expected shortfall), Bonferroni and Lorenz curves, total time on test transforms, and fixed-threshold survival measures. \citet{qepf2026} proposed a quantile-based effectiveness persistence function that is useful for measuring upper-tail behavior in the study of biosimilar drugs. 

\medskip
\noindent
Quantile functions and distribution functions are two equivalent ways of describing a probability distribution. However, quantile-based analysis has several advantages over the distribution-based approach. With heavy-tailed distributions, a single long-term survivor can have a marked effect on many reliability measures based on the distribution function. In such cases, quantile-based measures are more useful because they are less influenced by extreme values. This perspective has been developed systematically in quantile-based reliability analysis, where key descriptors (for example, the hazard--quantile function and the mean residual quantile function) are defined directly in the quantile domain and used for modeling and inference \citep{NairSankaranBalakrishnan2013}. In particular, \citet{MidhuSankaranNair2013} introduced a class of distributions characterized through a linear mean residual quantile function and presented generalizations via relationships among quantile-based reliability measures. Related quantile-domain constructions and estimation strategies have also been studied by \citet{MidhuSankaran2014QDF} through models defined via the quantile density function, and by \citet{SankaranMidhu2017MRQRC} through nonparametric estimation of the mean residual quantile function under right censoring. Along similar lines, \citet{SankaranKumar2018} proposed a flexible family of distributions defined directly by a quantile function and developed inference using L-moments, illustrating how quantile-defined models can remain practically tractable while retaining tail sensitivity. For various properties and applications of quantile functions, we may refer to \citep{Gilchrist2000, NairSankaranBalakrishnan2013}. 

\medskip
\noindent
Motivated by these considerations, we study a class of quantile functions for modeling and analyzing clinical data. The family is generated through a bilinear form of the upper tail mean, expressed in a quantile framework. We first introduce a quantile-based upper tail measure and study its fundamental properties. We then construct a class of quantile functions based on this measure, leading to a flexible family of non-negative distributions suitable for modeling upper-tail behavior in clinical outcomes.

\medskip
\noindent
The paper is organized as follows. Section~\ref{sec:prelim} introduces the QEPF and the related quantile-domain measures, and develops the general rational specification. Section~\ref{sec:closed_form} studies the resulting canonical family. Quantile-based reliability measures are discussed in Section~\ref{sec:measures}. Various characterization results are presented in Section~\ref{sec:characterizations}. Statistical inference for the proposed class of quantile functions is developed in Section~\ref{sec:inference}, and an application to real data is presented in Section~\ref{sec:data-analysis}.

\section[Preliminaries: QEPF]{Preliminaries}
\label{sec:prelim}

Let \(X\) be a continuous non-negative random variable with distribution function \(F(x)\) and quantile function
\begin{equation}
Q(u)=F^{-1}(u)=\inf\{x:F(x)\ge u\}, \qquad 0<u<1.
\end{equation}
If \(F\) is right-continuous and strictly increasing, then
\begin{equation}
F(Q(u))=u, \qquad 0<u<1.
\end{equation}
Thus the quantile function provides an equivalent representation of the distribution. For general properties and applications of quantile functions, see \cite{Parzen1979, Gilchrist2000} and \cite{NairSankaranBalakrishnan2013, NairSankaranSunoj2019}. In survival studies, the vitality function of $X$ is defined by
\begin{equation}
\label{eq:vitality}
v(x)=\mathbb{E}(X\mid X>x),
\end{equation}
which represents the mean of $X$ beyond the threshold $x$. The quantile counterpart is obtained by evaluating the vitality function at $x=Q(u)$. The quantile analogue of \Cref{eq:vitality} is the \emph{quantile-based vitality function}, defined by \citep{NairSankaranBalakrishnan2013}
\begin{equation}
\label{eq:vitality_def}
V(u)=\frac{1}{1-u}\int_u^1 Q(p)\,dp,
\qquad 0<u<1.
\end{equation}
This is closely related to expected shortfall in risk theory \citep{AcerbiTasche2002}.

\medskip
\noindent
Motivated by this idea, we now introduce the quantile-based effectiveness
persistence function and related quantile-domain measures. The purpose of this
section is twofold. First, we collect the basic identities linking the QEPF,
the mean residual quantile function, and the hazard--quantile function.
Second, we develop a general rational (M\"obius) specification for the QEPF.
The canonical family studied later arises as a boundary-aligned subclass of
this broader rational framework.

\begin{definition}[]
The \emph{quantile-based effectiveness persistence function (QEPF), $P(u)$} is defined as the ratio of the tail mean to the quantile function: \citep{qepf2026}
\begin{equation}
\label{eq:qepf_def}
P(u)=\frac{V(u)}{Q(u)}
=\frac{1}{(1-u)Q(u)}\int_u^1 Q(p)\,dp,
\qquad 0<u<1.
\end{equation}

\noindent
$P(u)$ is interpreted as the mean multiplicative uplift among the top $(1-u)\%$ relative to the threshold $Q(u)$.

\end{definition}

\medskip
\noindent
In reliability theory, the mean residual life function of a non-negative random variable \(X\) is defined by
$\displaystyle 
m(x)=\mathbb{E}(X-x\mid X>x),
$
which gives the expected remaining lifetime of a subject, given that the subject has survived up to time \(x\). The quantile analogue of the mean residual life function is called the \emph{mean residual quantile function (MRQF)}, and is defined by \citep{NairSankaranBalakrishnan2013}
\begin{equation}
\label{eq:mrqf_def}
M(u)=\frac{1}{1-u}\int_u^1\bigl(Q(p)-Q(u)\bigr)\,dp,
\qquad 0<u<1.
\end{equation}

\begin{proposition} 
\label{lem:qepf_mrq_identity}
For any quantile function $Q(u)$,\qquad
$\displaystyle
M(u)=(P(u)-1)Q(u),
\qquad 0<u<1.
$
\end{proposition}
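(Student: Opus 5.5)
The identity follows by splitting the integral in the definition \eqref{eq:mrqf_def} and recognizing the quantile-based vitality function \eqref{eq:vitality_def}. Since $Q(u)$ does not depend on the integration variable $p$, linearity of the integral gives
\begin{equation*}
M(u)=\frac{1}{1-u}\int_u^1 Q(p)\,dp-\frac{Q(u)}{1-u}\int_u^1 dp
=V(u)-Q(u),
\end{equation*}
because $\int_u^1 dp=1-u$. For this split to be legitimate, $\int_u^1 Q(p)\,dp$ must be finite. This is the standing assumption that $X$ has a finite mean, which is already implicit in defining $V(u)$, $P(u)$ and $M(u)$; since $Q\ge 0$ is nondecreasing, a finite mean gives finiteness of $\int_u^1 Q(p)\,dp$ for every $u\in(0,1)$.

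Next, I will substitute $V(u)=P(u)Q(u)$, which is just the definition \eqref{eq:qepf_def} of the QEPF rearranged. This yields $M(u)=P(u)Q(u)-Q(u)=(P(u)-1)Q(u)$, as claimed.

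The only delicate point is the division by $Q(u)$ hidden in the definition of $P(u)$, which requires $Q(u)>0$. For a continuous non-negative $X$, $Q(u)>0$ for all $u$ above $F(0)=0$, so this holds on $0<u<1$ (or at least wherever $P$ is defined). Where $Q(u)=0$ one can read the identity in the multiplied-out form $M(u)=V(u)-Q(u)$, so no real obstacle remains. The whole argument is a two-line manipulation, and the main care needed is simply stating the integrability and positivity conditions under which each quantity is well defined.
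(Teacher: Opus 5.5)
Your proof is correct and is the same argument as the paper's, which says only that the identity follows immediately from the definitions of $P(u)$ and $M(u)$: you split the integral to get $M(u)=V(u)-Q(u)$ and then substitute $V(u)=P(u)Q(u)$. Your remarks on finiteness of $\int_u^1 Q(p)\,dp$ and on $Q(u)>0$ for $u>0$ state conditions the paper leaves implicit.
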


\begin{proof}
The result follows immediately from \eqref{eq:qepf_def} and \eqref{eq:mrqf_def}.
\end{proof}

\begin{proposition}[]
\noindent
One of the basic concepts in reliability theory is the hazard rate, which represents the instantaneous failure of a subject in an interval, given that the subject has survived up to a specified time. The quantile version of the hazard rate is called the \emph{hazard--quantile function}, and is given by
\begin{equation}
\label{eq:hqf_def}
H(u)=\frac{1}{(1-u)q(u)}, \qquad 0<u<1.
\end{equation}
where $q(u)$ denotes the quantile density function, defined as $q(u) = \frac{d}{du} Q(u)$. For more properties of \(H(u)\), one may refer to \citet{NairSankaran2009} and \citet{NairSankaranBalakrishnan2013}.
\end{proposition}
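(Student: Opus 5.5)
The statement is essentially definitional, so the proof has two jobs. First, I will justify that \eqref{eq:hqf_def} really is the quantile version of the hazard rate, i.e.\ that $H(u)=h(Q(u))$, where $h(x)=f(x)/(1-F(x))$. Second, I will connect $H$ to the quantities already introduced, $M$ and $P$, in the spirit of \Cref{lem:qepf_mrq_identity}. Throughout I will assume $F$ is absolutely continuous and strictly increasing on its support, so that $F(Q(u))=u$ holds and $Q$ is differentiable with $q(u)>0$.

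For the first step, I will differentiate the identity $F(Q(u))=u$ with respect to $u$. This gives $f(Q(u))\,q(u)=1$, that is, $f(Q(u))=1/q(u)$. Substituting into $h(Q(u))=f(Q(u))/(1-F(Q(u)))$ and using $1-F(Q(u))=1-u$ gives
\[
h(Q(u))=\frac{1}{(1-u)q(u)}=H(u).
\]
This shows $H$ is the hazard rate read off at the $u$-th quantile.

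For the second step, I will write \eqref{eq:mrqf_def} as
\[
(1-u)M(u)=\int_u^1 Q(p)\,dp-(1-u)Q(u).
\]
Differentiating both sides in $u$ gives
\[
-M(u)+(1-u)M'(u)=-Q(u)+Q(u)-(1-u)q(u)=-(1-u)q(u).
\]
Rearranging yields
\[
H(u)=\frac{1}{M(u)-(1-u)M'(u)}.
\]
Finally, I will invoke \Cref{lem:qepf_mrq_identity} to substitute $M=(P-1)Q$, which expresses $H$ through $P$ and $Q$:
\[
H(u)=\frac{1}{(P(u)-1)Q(u)-(1-u)\bigl[P'(u)Q(u)+(P(u)-1)q(u)\bigr]}.
\]
This form can also be solved for $q/Q$ in terms of $P$ and $P'$.

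The main obstacle is regularity rather than algebra. The differentiation under the integral sign needs $Q$ to be continuous, so that $\frac{d}{du}\int_u^1 Q=-Q(u)$. It also requires $\int_u^1 Q(p)\,dp<\infty$, i.e.\ a finite mean. I will state these as standing assumptions, noting that they hold whenever $E(X)<\infty$ and $F$ has a positive density on its support.
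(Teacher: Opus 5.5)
Your proposal is correct. The paper gives no proof here: despite the \texttt{proposition} environment, this statement is a definition of $H(u)$, with the justification deferred to \citet{NairSankaran2009} and \citet{NairSankaranBalakrishnan2013}.

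Your first step supplies the missing justification of the name. Differentiating $F(Q(u))=u$ gives $f(Q(u))\,q(u)=1$, and together with $1-F(Q(u))=1-u$ this yields $H(u)=h(Q(u))$. It also establishes the identity $f(Q^{*}(u))=1/q^{*}(u)$, which the paper later uses without derivation in \eqref{eq:density_from_quantile_density} and in the likelihood of Section~\ref{sec:inference}.

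Your second step is not needed for this statement, but it is correct and reaches \eqref{eq:logderiv} by a different route. The paper obtains $q/Q$ in \Cref{prop:qepf_hqf_identity} by differentiating $(1-u)P(u)Q(u)=\int_u^1 Q(p)\,dp$ directly. You instead differentiate $(1-u)M(u)$ and then substitute $M=(P-1)Q$. The paper's route is one step shorter. Yours produces, as a by-product, the standard identity $H(u)=1/\{M(u)-(1-u)M'(u)\}$, which relates $H$ and $M$ without reference to $P$.

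Two small regularity points need tightening.
\begin{enumerate}
\item The step $\frac{d}{du}\int_u^1 Q(p)\,dp=-Q(u)$ is the fundamental theorem of calculus with a variable lower limit, not differentiation under the integral sign. Continuity of $Q$ at $u$ and $\int_u^1 Q(p)\,dp<\infty$ suffice. For non-negative nondecreasing $Q$, the latter is equivalent to $E(X)<\infty$, because $\int_0^u Q(p)\,dp\le uQ(u)$.
\item ``A positive density on its support'' should be sharpened. The inverse-function argument behind $q(u)=1/f(Q(u))$ needs $F$ differentiable at $Q(u)$ with $f(Q(u))>0$, for example $f$ continuous and positive on the interior of the support.
\end{enumerate}
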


\begin{proposition}
\label{prop:qepf_hqf_identity}
\begin{equation}
\label{eq:hmp_identity}
\frac{H(u)M(u)}{P(u)}
=
\frac{P(u)-1}{P(u)-1-(1-u)P'(u)},
\qquad 0<u<1.
\end{equation}
where $P'(u)$ denotes the derivative of $P(u)$ with respect to $u$.
\end{proposition}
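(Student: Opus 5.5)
The plan is to express everything through $Q$ and $P$, using that $(1-u)P(u)Q(u)=\int_u^1 Q(p)\,dp$. Differentiating this identity will yield $q(u)$ in terms of $P$, $P'$ and $Q$. Substituting into \eqref{eq:hqf_def} together with \Cref{lem:qepf_mrq_identity} should then give the claimed ratio after cancellation of $Q(u)$.

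\emph{Differentiation.} Write $I(u)=\int_u^1 Q(p)\,dp$, so that $I(u)=(1-u)P(u)Q(u)$ by \eqref{eq:qepf_def}. Assuming $Q$ is differentiable with quantile density $q$, and that $P$ is differentiable (which follows since $I$ and $Q$ are), we have $I'(u)=-Q(u)$. Differentiating the right-hand side by the product rule gives
\begin{equation*}
-Q(u) = -P(u)Q(u) + (1-u)P'(u)Q(u) + (1-u)P(u)q(u).
\end{equation*}
Solving for the quantile density,
\begin{equation*}
(1-u)P(u)q(u) = Q(u)\bigl(P(u)-1-(1-u)P'(u)\bigr).
\end{equation*}

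\emph{Substitution.} Using \eqref{eq:hqf_def}, this becomes
\begin{equation*}
H(u) = \frac{P(u)}{Q(u)\bigl(P(u)-1-(1-u)P'(u)\bigr)}.
\end{equation*}
Multiplying by $M(u)=(P(u)-1)Q(u)$ from \Cref{lem:qepf_mrq_identity} cancels $Q(u)$. Dividing by $P(u)$ then yields \eqref{eq:hmp_identity}.

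\emph{Main obstacle.} The only delicate point is justifying the divisions. We need $Q(u)>0$, which holds for $u>0$ when $X$ is non-negative and nondegenerate near the lower endpoint. We need $P(u)>0$, which holds since $V(u)\ge Q(u)>0$. Finally, the denominator $P(u)-1-(1-u)P'(u)$ must be nonzero. This is equivalent to $q(u)\neq 0$, i.e.\ $q(u)>0$, which is exactly the condition required for $H(u)$ to be defined. I would state these regularity conditions explicitly (continuity of $Q$ and $q(u)>0$ on $(0,1)$); everything else is routine algebra.
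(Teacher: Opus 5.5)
Your proof is correct and follows the paper's argument essentially step for step: differentiate $(1-u)P(u)Q(u)=\int_u^1 Q(p)\,dp$, solve for $q(u)$, substitute into $H(u)=1/\{(1-u)q(u)\}$, multiply by $M(u)=(P(u)-1)Q(u)$ and divide by $P(u)$. Your extra remarks on when the divisions are legitimate ($Q(u)>0$, $P(u)>0$, and the denominator being nonzero exactly when $q(u)\neq 0$) are accurate and make explicit regularity assumptions that the paper leaves implicit.
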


\begin{proof}
From \Cref{eq:qepf_def}, we obtain
$\displaystyle 
(1-u)P(u)Q(u)=\int_u^1 Q(p)\,dp.
$

\medskip
\noindent
Differentiating both sides give
$\displaystyle 
-P(u)Q(u)+(1-u)P'(u)Q(u)+(1-u)P(u)q(u)=-Q(u).
$
Hence
\begin{equation}
\label{eq:logderiv}
\frac{q(u)}{Q(u)}
=
\frac{P(u)-1-(1-u)P'(u)}{(1-u)P(u)}.
\end{equation}
From \eqref{eq:hqf_def} and \eqref{eq:logderiv}, \qquad
$\displaystyle 
H(u)
=
\frac{1}{(1-u)q(u)}
=
\frac{P(u)}{Q(u)\{P(u)-1-(1-u)P'(u)\}}.
$

\medskip
\noindent
Using \Cref{lem:qepf_mrq_identity}, we have \qquad
$\displaystyle 
M(u)=(P(u)-1)Q(u).
$

\begin{equation*}
\text{Therefore, }
\frac{H(u)M(u)}{P(u)}
=
\frac{P(u)}{Q(u)\{P(u)-1-(1-u)P'(u)\}}
\cdot
\frac{(P(u)-1)Q(u)}{P(u)}, \text{ simplifies to \eqref{eq:hmp_identity}.}
\end{equation*}
\end{proof}

\begin{proposition}
\label{prop:qepf_recovers_Q}
The persistence function $P(u)$ determines the quantile function up to a positive multiplicative constant. Specifically,
\begin{equation}
\label{eq:Q_recovery_from_P}
Q(u)
=
\frac{C}{(1-u)P(u)}
\exp\!\left(
-\int_{u_0}^{u}\frac{dt}{(1-t)P(t)}
\right),
\qquad 0<u<1,
\end{equation}
for some constant $C>0$, where $u_0\in(0,1)$ is fixed.
\end{proposition}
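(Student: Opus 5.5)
Rather than solving for $Q$ directly, I will work with the tail integral
$$I(u)=\int_u^1 Q(p)\,dp,$$
which turns the defining relation \eqref{eq:qepf_def} into a first-order linear ODE with an explicit solution.

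\emph{Step 1: the ODE for $I$.} By \eqref{eq:qepf_def}, $I(u)=(1-u)P(u)Q(u)$. Assuming $Q$ is continuous, we have $I'(u)=-Q(u)$. Substituting $Q(u)=I(u)/\{(1-u)P(u)\}$ gives
$$\frac{I'(u)}{I(u)}=-\frac{1}{(1-u)P(u)},\qquad 0<u<1.$$
This is legitimate because $I(u)>0$. Indeed, $Q$ is non-negative and non-decreasing, so $Q(p)>0$ on a set of positive measure near $1$ unless $X$ is degenerate at $0$, which is excluded since $P$ must be well defined. Finiteness of $I$ also requires $\mathbb{E}X<\infty$, which is implicit in the definition of $P$.

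\emph{Step 2: integrate.} Fix $u_0\in(0,1)$ and integrate from $u_0$ to $u$:
$$I(u)=I(u_0)\exp\!\left(-\int_{u_0}^{u}\frac{dt}{(1-t)P(t)}\right).$$
The integrand is continuous and finite on the compact interval between $u_0$ and $u$, since $P>0$ there. Hence the integral exists for every $u\in(0,1)$.

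\emph{Step 3: recover $Q$.} Writing $Q(u)=I(u)/\{(1-u)P(u)\}$ yields exactly \eqref{eq:Q_recovery_from_P} with $C=I(u_0)=\int_{u_0}^1 Q(p)\,dp>0$.

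\emph{Step 4: the converse.} For the "up to a multiplicative constant" claim, note two things. First, $P$ is invariant under $Q\mapsto cQ$ for $c>0$, so the constant can never be pinned down by $P$. Second, any two quantile functions with the same $P$ both satisfy \eqref{eq:Q_recovery_from_P}, so they differ only in $C$.

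The main obstacle is the regularity needed for Step 1. Differentiating $I$ requires continuity of $Q$, and dividing by $I$ requires $I>0$. If $Q$ is merely left-continuous, I would instead use that $I$ is absolutely continuous with $I'=-Q$ almost everywhere. In that case the logarithmic ODE holds almost everywhere, and integrating it still gives the same formula for $I$. The identity for $Q$ then follows pointwise from the definition of $P$, so no differentiability of $P$ is needed at any stage.
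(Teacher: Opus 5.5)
Your proof is correct, but it takes a different route from the paper's.

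\textbf{The paper's route.} It works with $\log Q$ directly. It reuses the log-derivative identity \eqref{eq:logderiv}, which comes from differentiating $(1-u)P(u)Q(u)=\int_u^1 Q(p)\,dp$, and writes
$\frac{d}{du}\log Q=\frac{P-1}{(1-u)P}-\frac{P'}{P}$. After integrating, it needs the split $\frac{P-1}{(1-t)P}=\frac{1}{1-t}-\frac{1}{(1-t)P}$. The $-P'/P$ term then yields the factor $P(u_0)/P(u)$, and the $1/(1-t)$ term yields $(1-u_0)/(1-u)$. The constant is only ``collected''; it equals $(1-u_0)P(u_0)Q(u_0)$.

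\textbf{Your route.} You integrate a first-order linear ODE for the tail integral $I(u)=\int_u^1 Q(p)\,dp$. The key observation is $\frac{1}{(1-t)P(t)}=\frac{Q(t)}{I(t)}=-\frac{d}{dt}\log I(t)$. So the exponential factor is exactly $I(u)/I(u_0)$, and the formula follows at once from $Q=I/\{(1-u)P\}$.

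\textbf{What your route buys.} It needs weaker regularity: no derivative of $P$ and no differentiability of $Q$. Continuity of $Q$ suffices, or mere monotonicity via absolute continuity of $I$, which is exactly your closing remark. It also identifies the constant explicitly as $C=\int_{u_0}^1 Q(p)\,dp$, which agrees with the paper's $(1-u_0)P(u_0)Q(u_0)$.

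\textbf{What the paper's route buys.} It stays continuous with the rest of the paper. \eqref{eq:logderiv} is the same identity into which the M\"obius form is later substituted in Theorem~\ref{thm:general_rational_mrq}. So the recovery formula falls out of machinery the paper needs anyway.

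Your Step~4 point, that $P$ is invariant under $Q\mapsto cQ$ so the constant can never be identified, makes explicit something the paper leaves implicit.
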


\begin{proof}
From \Cref{eq:logderiv}, it follows that
\begin{equation}
\label{eq:logderiv_Q}
\frac{d}{du}\log Q(u)
=
\frac{P(u)-1}{(1-u)P(u)}-\frac{P'(u)}{P(u)}.
\end{equation}
Integrating \Cref{eq:logderiv_Q} from \(u_0\) to \(u\) yields
\begin{equation}
\label{eq:logderiv_Qu0}
\log Q(u)-\log Q(u_0)
=
\int_{u_0}^{u}\frac{P(t)-1}{(1-t)P(t)}\,dt
-
\bigl(\log P(u)-\log P(u_0)\bigr).
\end{equation}
Exponentiating \Cref{eq:logderiv_Qu0} gives \qquad
$\displaystyle
Q(u)
=
Q(u_0)
\frac{P(u_0)}{P(u)}
\exp\!\left(
\int_{u_0}^{u}\frac{P(t)-1}{(1-t)P(t)}\,dt
\right).
$

\medskip
\noindent
Using the decomposition \qquad
$\displaystyle
\frac{P(t)-1}{(1-t)P(t)}
=
\frac{1}{1-t}
-
\frac{1}{(1-t)P(t)},
$

\medskip
\noindent
we get
$\displaystyle
\int_{u_0}^{u}\frac{P(t)-1}{(1-t)P(t)}\,dt
=
\int_{u_0}^{u}\frac{dt}{1-t}
-
\int_{u_0}^{u}\frac{dt}{(1-t)P(t)}
=
-\log\!\left(\frac{1-u}{1-u_0}\right)
-
\int_{u_0}^{u}\frac{dt}{(1-t)P(t)}.$

\medskip
\noindent
Substituting back and collecting constants yields
$\displaystyle
Q(u)
=
\frac{C}{(1-u)P(u)}
\exp\!\left(
-\int_{u_0}^{u}\frac{dt}{(1-t)P(t)}
\right), 
$
for some constant $C>0$.
\end{proof}

\subsection[Mobius specification]{M\"obius specification and canonical reduction}
\label{sec:mobius}

From \Cref{lem:qepf_mrq_identity}, we have
$\displaystyle 
P(u)=1+\frac{M(u)}{Q(u)},\qquad 0<u<1.
$
Thus a rational form for \(M(u)/Q(u)\) is equivalent to a M\"obius specification of \(P(u)\). We therefore consider
\begin{equation}
\label{eq:mobius_P_general}
P(u)=\frac{a+bu}{c+du},\qquad 0<u<1,
\end{equation}
where \(a,b,c,d\in\mathbb{R}\) satisfies
$\displaystyle 
a+bu>0, and c+du>0. 
$
This leads to a broader rational class. The canonical family studied later is obtained by imposing the boundary conditions \(Q(0)=0\), finite mean, and finite upper endpoint.

\begin{theorem}
\label{thm:general_rational_mrq}
Let \(Q:(0,1)\to(0,\infty)\) be differentiable, and assume that its associated persistence function satisfies
\eqref{eq:mobius_P_general}, where \(a+bu>0\) and \(c+du>0\) on \((0,1)\), with \(a+b\neq 0\) and \(b\neq 0\). Then \Q\, must be of the form
\begin{equation}
\label{eq:mobius_Q_general}
Q(u)
=
K(c+du)(1-u)^{-\xi}(a+bu)^\eta,
\qquad 0<u<1, \text{ for some constant \(K>0\),}
\end{equation}
\begin{equation}
\label{eq:mobius_xi_eta}
\text{where} \qquad
\xi=\frac{(a+b)-(c+d)}{a+b},
\qquad
\eta=-\frac{ab-ad+b^2+bc}{b(a+b)}.
\end{equation}
Conversely, if \(Q\) is given by \eqref{eq:mobius_Q_general}, is differentiable and strictly increasing on \((0,1)\), and if \(\xi<1\), then \eqref{eq:mobius_P_general} holds. Moreover, if \(c+d\neq 0\), then
$\displaystyle 
Q(u) = K(c+d)(a+b)^\eta(1-u)^{-\xi},
\qquad u\to 1.
$
\end{theorem}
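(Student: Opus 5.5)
Our approach is to apply \Cref{prop:qepf_recovers_Q} with the M\"obius form \eqref{eq:mobius_P_general} substituted for $P$, compute the resulting integral in closed form by partial fractions, and then check the converse and the tail asymptotics directly.

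\textbf{Forward direction.} By \Cref{prop:qepf_recovers_Q},
\[
Q(u)=\frac{C}{(1-u)P(u)}\exp\!\left(-\int_{u_0}^{u}\frac{dt}{(1-t)P(t)}\right).
\]
With \eqref{eq:mobius_P_general}, the prefactor is $(c+du)/\{(1-u)(a+bu)\}$ and the integrand is $(c+dt)/\{(1-t)(a+bt)\}$. Since $b\neq0$ and $a+b\neq0$, the roots $t=1$ and $t=-a/b$ are distinct, so we decompose
\[
\frac{c+dt}{(1-t)(a+bt)}=\frac{A}{1-t}+\frac{B}{a+bt},\qquad A=\frac{c+d}{a+b},\quad B=\frac{bc-ad}{a+b}.
\]
Both $a+bt>0$ and $1-t>0$ on the interval, so
\[
\int_{u_0}^{u}\frac{c+dt}{(1-t)(a+bt)}\,dt=-A\log(1-u)+\frac{B}{b}\log(a+bu)+\text{const}.
\]
Exponentiating with the minus sign and absorbing constants into $K>0$ gives
\[
Q(u)=K(c+du)(1-u)^{-1+A}(a+bu)^{-1-B/b}.
\]
Two bookkeeping checks remain. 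First, $-1+A=-\xi$ with $\xi=1-A$, which matches \eqref{eq:mobius_xi_eta}. Second,
\[
-1-\frac{B}{b}=-\frac{b(a+b)+bc-ad}{b(a+b)},
\]
whose numerator is $ab+b^2+bc-ad$, so this equals $\eta$.

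\textbf{Converse.} Given \eqref{eq:mobius_Q_general}, we compute $\int_u^1 Q(p)\,dp$ directly. We propose that
\[
G(u)=\frac{K}{b}\,\frac{(1-u)^{1-\xi}(a+bu)^{\eta+1}}{\,\cdot\,}
\]
is, up to the normalising constant, an antiderivative of $-Q$; more concisely, the candidate is $(1-u)P(u)Q(u)=K(1-u)^{1-\xi}(a+bu)^{\eta+1}$. We will verify by differentiation that
\[
\frac{d}{du}\Bigl[K(1-u)^{1-\xi}(a+bu)^{\eta+1}\Bigr]=-Q(u).
\]
This reduces to the identity
\[
-(1-\xi)(a+bu)+b(\eta+1)(1-u)=-(c+du),
\]
which holds by the partial-fraction relations $1-\xi=A$ and $b(\eta+1)=-B$: comparing coefficients gives $aA-bB\cdot(-1)\ldots$, and both coefficient equations hold. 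The condition $\xi<1$ ensures that this antiderivative tends to $0$ as $u\to1$, because $(1-u)^{1-\xi}\to0$ while $(a+bu)^{\eta+1}$ stays bounded near $u=1$ (using $a+b>0$). Hence
\[
\int_u^1 Q(p)\,dp=K(1-u)^{1-\xi}(a+bu)^{\eta+1}.
\]
Dividing by $(1-u)Q(u)$ returns $(a+bu)/(c+du)$. Strict monotonicity is only needed so that $Q$ is a genuine quantile function.

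\textbf{Asymptotics.} As $u\to1$, we have $c+du\to c+d\neq0$ and $(a+bu)^\eta\to(a+b)^\eta$, which gives the stated equivalence.

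The main obstacle is the coefficient bookkeeping, namely matching $-1-B/b$ to the stated $\eta$ and verifying the linear identity in the converse. We will do both by expanding and comparing the constant and $u$ coefficients.
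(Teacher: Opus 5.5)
Your proof is correct and is essentially the paper's argument. The forward direction integrates the log-derivative identity \eqref{eq:logderiv} through its integrated form in \Cref{prop:qepf_recovers_Q}, with $A=(c+d)/(a+b)$, $B=(bc-ad)/(a+b)$, $\xi=1-A$ and $\eta=-1-B/b$. The converse uses the paper's own function $G(u)=K(1-u)^{1-\xi}(a+bu)^{\eta+1}$, with $G'=-Q$ and $G(1^-)=0$ following from $\xi<1$ and $a+b>0$.

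Only the presentation needs fixing. Drop the garbled first display for $G$ and the unfinished coefficient comparison: the required identity $-(1-\xi)(a+bu)+b(\eta+1)(1-u)=-(c+du)$ is just $c+du=A(a+bu)+B(1-u)$, which is the defining relation of your partial fractions.
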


\begin{proof}
%
Substituting \eqref{eq:mobius_P_general} into \eqref{eq:logderiv} and simplifying, we obtain
\begin{equation*}
\frac{q(u)}{Q(u)}
=
\frac{\xi}{1-u}
+\frac{\eta b}{a+bu}
+\frac{d}{c+du},
\end{equation*}
where \(\xi\) and \(\eta\) are given by \eqref{eq:mobius_xi_eta}. Integrating both sides yields
\begin{equation}
\label{eq:log_Q}
\log Q(u)
=
-\xi\log(1-u)+\eta\log(a+bu)+\log(c+du)+\log K, \qquad K>0
\end{equation}
Exponentiating \Cref{eq:log_Q} gives \eqref{eq:mobius_Q_general}. For the converse, define

\begin{equation}
G(u):=(1-u)\frac{a+bu}{c+du}Q(u).
\end{equation}

\noindent
Using \eqref{eq:mobius_Q_general}, we obtain
$
G(u)=K(a+bu)^{\eta+1}(1-u)^{1-\xi}.
$
A direct differentiation gives
$G'(u)=-Q(u).
$
Since \(\xi<1\), we have \(1-\xi>0\), and because \(a+bu>0\) on \((0,1)\) with \(a+b\neq 0\),
$\displaystyle
G(u)\to 0 \text{ as } u\to 1.
$
Hence
\begin{equation}
\label{eq:G_u}
G(u)=\int_u^1 Q(p)\,dp.
\end{equation}
Dividing \Cref{eq:G_u} by \((1-u)Q(u)\) we get \eqref{eq:mobius_P_general}. The stated asymptotic relation follows immediately from \eqref{eq:mobius_Q_general}.
\end{proof}

\subsection{Quantile-domain measures for the general rational class}
\label{sec:general_measures}

For the general rational specification
\begin{equation*}
P(u)=\frac{a+bu}{c+du},
\qquad 0<u<1.
\end{equation*}
The associated tail mean, mean residual quantile function, and hazard--quantile function can now be written explicitly.

\begin{proposition}
\label{prop:general_class_measures}
Under the assumptions of \Cref{thm:general_rational_mrq},

\begin{align*}
V(u) 
&= P(u)Q(u) 
= K(a+bu)^{\eta+1}(1-u)^{-\xi},
\qquad 0<u<1, \\
M(u) 
&= (P(u)-1)Q(u)
= K\{a-c+(b-d)u\}(1-u)^{-\xi}(a+bu)^\eta,
\qquad 0<u<1,\\
H(u) 
&= \frac{(1-u)^\xi(a+bu)^{1-\eta}}{K\,N(u)},
\qquad 0<u<1, \\
\text{where}\qquad
N(u) 
&= \{a-c+(b-d)u\}(c+du)-(bc-ad)(1-u).
\end{align*}
\end{proposition}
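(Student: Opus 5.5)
The plan is to substitute the explicit form \eqref{eq:mobius_Q_general} from \Cref{thm:general_rational_mrq}, together with the M\"obius form \eqref{eq:mobius_P_general} of $P(u)$, into the identities already established in \Cref{sec:prelim}. No new integration is needed: $V$ and $M$ come from purely algebraic identities, and $H$ comes from the logarithmic-derivative relation \eqref{eq:logderiv}.

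\emph{The tail mean.} By definition \eqref{eq:qepf_def}, $V(u)=P(u)Q(u)$. Multiplying $(a+bu)/(c+du)$ by $K(c+du)(1-u)^{-\xi}(a+bu)^\eta$, the factor $c+du$ cancels and we get $K(a+bu)^{\eta+1}(1-u)^{-\xi}$. This is the function $G(u)/(1-u)$ already met in the proof of \Cref{thm:general_rational_mrq}, which serves as a consistency check.

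\emph{The mean residual quantile function.} By \Cref{lem:qepf_mrq_identity}, $M(u)=(P(u)-1)Q(u)$. We have
\[
P(u)-1=\frac{a-c+(b-d)u}{c+du}.
\]
Multiplying by $Q(u)$ cancels $c+du$ again and gives the stated expression for $M(u)$.

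\emph{The hazard--quantile function.} I would not go through \eqref{eq:hmp_identity} directly. Instead I would use the intermediate formula from the proof of \Cref{prop:qepf_hqf_identity},
\[
H(u)=\frac{P(u)}{Q(u)\{P(u)-1-(1-u)P'(u)\}}.
\]
Differentiating the M\"obius form gives $P'(u)=(bc-ad)/(c+du)^2$. Putting everything over the common denominator $(c+du)^2$,
\[
P(u)-1-(1-u)P'(u)=\frac{\{a-c+(b-d)u\}(c+du)-(bc-ad)(1-u)}{(c+du)^2}=\frac{N(u)}{(c+du)^2}.
\]
Therefore
\[
H(u)=\frac{(a+bu)(c+du)}{N(u)\,Q(u)}.
\]
Inserting \eqref{eq:mobius_Q_general}, the factor $c+du$ cancels and $(a+bu)^{1-\eta}(1-u)^{\xi}/K$ remains, which is the claimed formula.

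The computations are routine; the only point needing care is well-definedness of $H$, namely that $N(u)\neq 0$. I would argue this from \eqref{eq:logderiv}: $q(u)/Q(u)=N(u)/\{(1-u)(a+bu)(c+du)\}$. Under the hypotheses $Q>0$, $a+bu>0$, $c+du>0$, and with $Q$ strictly increasing (as in the converse part of \Cref{thm:general_rational_mrq}), this forces $N(u)>0$ on $(0,1)$. Hence all three expressions are valid throughout $(0,1)$.
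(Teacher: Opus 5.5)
Your proposal is correct and follows the paper's proof. $V$ and $M$ come from substituting \eqref{eq:mobius_Q_general} into $V=PQ$ and $M=(P-1)Q$. For $H$, your $H=P/(Q\{P-1-(1-u)P'\})$ with $P-1-(1-u)P'=N/(c+du)^2$ is simply the paper's route through $q/Q=N/\{(1-u)(a+bu)(c+du)\}$ from \eqref{eq:logderiv}, rearranged.

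Your closing check that $N>0$ is not in the paper. As written it is slightly compressed, because monotonicity of $Q$ by itself only gives $N\ge 0$. Strict positivity needs one more line: for example, $N(u)=d(b-d)u^2+2c(b-d)u+c(a-b-c)+ad$ can touch zero only at a double root $u=-c/d$, where $c+du=0$, so that point lies outside $(0,1)$.
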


\begin{proof}
The expression for $V(u)$ follows immediately from $V(u)=P(u)Q(u)$.

\medskip
\noindent
Using \Cref{lem:qepf_mrq_identity},
$\displaystyle 
M(u)=(P(u)-1)Q(u)
$.

\medskip
\noindent
Now substituting 
$\displaystyle 
P(u)-1=\frac{a-c+(b-d)u}{c+du}
$
gives the stated formula for $M(u)$.

\medskip
\noindent
We note that \qquad
$\displaystyle 
\frac{q(u)}{Q(u)}
=
\frac{N(u)}{(1-u)(a+bu)(c+du)}.
$

\medskip
\noindent
Therefore \qquad
$\displaystyle 
(1-u)q(u)
=
Q(u)\frac{N(u)}{(a+bu)(c+du)}
=
K(1-u)^{-\xi}(a+bu)^{\eta-1}N(u),
$

\begin{equation*}
\text{which yields} \qquad
H(u)=\frac{1}{(1-u)q(u)}
=
\frac{(1-u)^\xi(a+bu)^{1-\eta}}
{K\,N(u)}.
\end{equation*}
\end{proof}

\definecolor{mobBlack}{RGB}{0,0,0}
\definecolor{mobBlue}{RGB}{0,0,255}
\definecolor{mobRed}{RGB}{255,0,0}
\definecolor{mobGreen}{RGB}{0,153,0}
\definecolor{mobPink}{RGB}{255,0,255}

\pgfplotscreateplotcyclelist{mobiusCycle}{
  {mobBlack, line width=1pt, solid},
  {mobBlue, line width=1pt, dashed},
  {mobRed, line width=1pt, densely dotted},
  {mobGreen, line width=1pt, dashdotted},
  {mobPink, line width=1pt, densely dashed}
}


\def\umin{0.02}
\def\umax{0.98}

\def\mobiuslist{
  1/0/0/1,
  1/1/0/1,
  1/2/0/2,
  2/-1/0/1,
  2/1/1/0
}

\begin{figure}[ht]
\centering
\begin{tikzpicture}
\begin{groupplot}[
  group style={group size=3 by 1, horizontal sep=1cm, vertical sep=1.0cm},
  width=0.37\textwidth,
  height=0.40\textwidth,
  domain=\umin:\umax,
  samples=260,
  cycle list name=mobiusCycle,
  legend columns=3,
  legend style={
    /tikz/every even column/.append style={column sep=5pt},
    font=\scriptsize
  },
  legend image post style={mark=none}
]


\nextgroupplot[
  title={$H(u)$},
  title style={at={(0.5,0.975)},anchor=south},
  ymin=0, ymax=10,
  legend to name=legMobiusPlots
]
\foreach \aa/\bb/\cc/\dd in \mobiuslist {
  \addplot+[no marks] {Hg(x,\aa,\bb,\cc,\dd)};
  \addlegendentryexpanded{$(a,b,c,d)=(\aa,\bb,\cc,\dd)$}
}

\nextgroupplot[
  title={$P(u)$},
  title style={at={(0.5,0.975)},anchor=south},
  ymode=log
]
\foreach \aa/\bb/\cc/\dd in \mobiuslist {
  \addplot+[no marks] {Pg(x,\aa,\bb,\cc,\dd)};
}

\nextgroupplot[
  title={$M(u)$},
  ymin=0, ymax=2,
  title style={at={(0.5,0.975)},anchor=south}
]
\foreach \aa/\bb/\cc/\dd in \mobiuslist {
  \addplot+[no marks] {Mg(x,\aa,\bb,\cc,\dd)};
}

\end{groupplot}

\node[anchor=north] at (group c2r1.south) [yshift=-0.90cm]
{\pgfplotslegendfromname{legMobiusPlots}};

\end{tikzpicture}

\caption{Plots of $H(u)$, $P(u)$, and $M(u)$ for selected valid parameter
combinations in the general M\"obius QEPF family
\(P(u)=(a+bu)/(c+du)\), with common scale constant \(k=1\). The \(P(u)\)
panel is shown on a log scale to preserve visibility across the selected cases.}

\label{fig:mobius_general_plots}
\end{figure}

\Cref{fig:mobius_general_plots} illustrates for various forms for \(H(u)\), \(M(u)\), and \(P(u)\) of selected members of the general family. Across these examples, the induced functions \(H(u)\), \(M(u)\), and \(P(u)\) display distinct regimes. In particular, \(P(u)\) may be decreasing, constant, or increasing, and the corresponding \(H(u)\) and \(M(u)\) can have monotone or non-monotone property. This illustrates that the broader rational family is substantially more flexible than the reduced one-parameter subclass. The classification of admissible shapes still depends on the exact parameter values, so the figure should be read as an illustration of possible behavior rather than as a complete classification.

\begin{proposition}
\label{thm:mobius_to_canonical}
Assume that $Q$ is differentiable and strictly increasing on $(0,1)$, with
$Q(0)=0$, finite mean $\mu$,
and finite upper endpoint 
$\omega:=Q(1^-)<\infty$
Then $P(u)$ reduces to 

\begin{equation*}
P(u)=\frac{a+bu}{(a+b)u},\qquad 0<u<1,\ a>0,\ a+b>0.
\end{equation*}
\end{proposition}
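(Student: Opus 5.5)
We work within the Möbius class \eqref{eq:mobius_P_general} and use the representation of \Cref{thm:general_rational_mrq}, $Q(u)=K(c+du)(1-u)^{-\xi}(a+bu)^\eta$. Each of the three boundary conditions will pin down one parameter relation. Throughout, the Möbius coefficients are defined only up to a common positive factor, so we are free to normalize at the end.

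\emph{Finite upper endpoint.} As $u\to1$ we have $V(u)\to\omega$ and $Q(u)\to\omega$, so $P(1^-)=1$. This gives $(a+b)/(c+d)=1$, i.e.\ $c+d=a+b$. By \eqref{eq:mobius_xi_eta} this is exactly $\xi=0$, which is consistent with the asymptotic statement $Q(u)\sim K(c+d)(a+b)^\eta(1-u)^{-\xi}$ of \Cref{thm:general_rational_mrq}: a finite $\omega>0$ forces $\xi=0$.

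\emph{The condition $Q(0)=0$.} We have $V(0^+)=\mu$, which is finite and strictly positive because $Q>0$ on $(0,1)$. Since $Q(0^+)=0$, it follows that $P(u)=V(u)/Q(u)\to+\infty$ as $u\to0^+$. A Möbius function can blow up at $0$ only if its denominator vanishes there while its numerator does not, so $c=0$ and $a\neq0$.

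\emph{Signs and normalization.} With $c=0$ and $c+d=a+b$ we get $d=a+b$. The positivity requirements $a+bu>0$ and $du>0$ on $(0,1)$ give $a\ge0$ (hence $a>0$, since $a\neq0$) and $d=a+b>0$. This yields
\[
P(u)=\frac{a+bu}{(a+b)u},\qquad a>0,\ a+b>0.
\]

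As a consistency check, these parameters give $\xi=0$ and $\eta=-(ab-a(a+b)+b^2)/(b(a+b))=-(b-a)/(a+b)\cdot\ldots$. More directly, $Q(u)=K(a+b)u(a+bu)^\eta$, which indeed vanishes at $0$ and is finite at $1$. This confirms that the conditions are compatible, and also covers the case $b=0$ excluded in \Cref{thm:general_rational_mrq}.

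\textbf{Main obstacle.} The delicate step is justifying the two limits $P(1^-)=1$ and $P(0^+)=\infty$ rigorously from the hypotheses, rather than reading them off the parameters.

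For $u\to1$, we write $V(u)$ as an average of $Q$ over $[u,1)$. It is squeezed between $Q(u)$ and $\omega$, and both of these tend to $\omega$.

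For $u\to0$, we need $\mu>0$. This follows from $Q>0$ and from $V(u)\to\int_0^1Q=\mu$ by dominated convergence, using the finite mean. We must also rule out the degenerate possibility that numerator and denominator vanish simultaneously at $0$ (a cancellation). This is handled by observing that if $a=c=0$ then $P$ is constant, contradicting $P(0^+)=\infty$.
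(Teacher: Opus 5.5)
Your proof is correct, but it uses the finite upper endpoint differently from the paper.

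\textbf{The $u\to0$ step} is the same in both: $P(0^+)=\infty$ forces $c=0$ and hence $d>0$.

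\textbf{The $u\to1$ step} differs. The paper invokes the asymptotic $Q(u)\sim Kd(a+b)^\eta(1-u)^{-\xi}$ from \Cref{thm:general_rational_mrq} to get $\xi\le0$, i.e.\ $d\ge a+b$. It then uses $V>Q$ (so $(a+b)/d\ge1$) to get $d\le a+b$. Your squeeze $Q(u)\le V(u)\le\omega$ gives $P(1^-)=1$ directly, so $c+d=a+b$ in one step, without the explicit solution of the ODE.

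\textbf{What your route buys.} \Cref{thm:general_rational_mrq} assumes $b\neq0$ and $a+b\neq0$, so the paper's argument silently excludes $b=0$ and later handles it only as a ``continuous limit.'' Your main argument covers $b=0$ uniformly. You are also more careful at $0$: the paper gets $a>0$ from positivity, which only gives $a\ge0$, since $a=0$ yields $P\equiv1$. You rule out this cancellation explicitly.

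\textbf{What the paper's route buys.} It links the endpoint condition to the tail exponent $\xi$ of the general rational class, which says something about that broader family.

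\textbf{Two things to tidy.}
\begin{itemize}
\item Reading off $\lim_{u\to1}P(u)=(a+b)/(c+d)$ presupposes $c+d\neq0$, which is not yet known in your ordering. Either do the $u\to0$ step first, so that $c+d=d>0$, or note that $c+d=0$ would make $P$ either unbounded near $1$ or constant, both excluded.
\item Drop or finish your ``consistency check.'' With $c=0$ and $d=a+b$ it gives $\eta=a/b-1$, matching \Cref{thm:closed_form_quantile}. Because it uses the formula of \Cref{thm:general_rational_mrq}, the check itself does not cover $b=0$; only your main argument does.
\end{itemize}
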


\begin{proof}
Since $Q(0)=0$ and $Q$ is strictly increasing with finite mean,
\begin{equation*}
\int_u^1 Q(p)\,dp \to \int_0^1 Q(p)\,dp \in (0,\infty)
\qquad \text{and} \qquad
Q(u)\to 0
\quad (u\to 0).
\end{equation*}
Hence
\begin{equation*}
P(u)=\frac{1}{(1-u)Q(u)}\int_u^1 Q(p)\,dp \to \infty,
\qquad u\to 0.
\end{equation*}
For
\begin{equation*}
P(u)=\frac{a+bu}{c+du},
\end{equation*}
this is possible only if $c=0$. Thus
\begin{equation*}
P(u)=\frac{a+bu}{du},
\end{equation*}
and since $du>0$ on $(0,1)$, we have $d>0$. Now $c+d=d\neq 0$, so by \Cref{thm:general_rational_mrq},
\begin{equation*}
Q(u)\sim Kd(a+b)^\eta (1-u)^{-\xi},
\qquad u\to 1.
\end{equation*}
Because $Q(1^-)<\infty$, it follows that $\xi\le 0$. With $c=0$, this gives
\begin{equation*}
\xi=\frac{a+b-d}{a+b}\le 0,
\qquad\text{so}\qquad
d\ge a+b.
\end{equation*}
On the other hand, since $Q$ is strictly increasing,
\begin{equation*}
V(u)>Q(u),
\qquad 0<u<1,
\end{equation*}
and therefore $P(u)>1$ on $(0,1)$. Hence
\begin{equation*}
\lim_{u\to 1} P(u)=\frac{a+b}{d}\ge 1,
\end{equation*}
so $d\le a+b$. Therefore $d=a+b$, and
\begin{equation*}
P(u)=\frac{a+bu}{(a+b)u}.
\end{equation*}
Finally, positivity of $P(u)$ on $(0,1)$ yields $a>0$ and $a+b>0$.
\end{proof}

\noindent
The boundary case $b=0$ is included as a continuous limit of the canonical family and is treated separately below.

\section{A new class of distributions}
\label{sec:closed_form}

\Cref{thm:mobius_to_canonical} identifies a canonical boundary-aligned
subclass within the broader rational/M\"obius QEPF framework. Under the
conditions $Q(0)=0$, finite mean, and finite upper endpoint, the QEPF must take
the form
\begin{equation*}
P(u)=\frac{a+bu}{(a+b)u},\qquad 0<u<1,
\end{equation*}
with $a>0$ and $a+b>0$. This section derives the corresponding quantile family,
reparameterizes it in terms of mean and shape.

\begin{theorem}
\label{thm:closed_form_quantile}
Assume that
\begin{equation*}
P(u)=\frac{a+bu}{(a+b)u},\qquad 0<u<1,
\end{equation*}
with $a>0$ and $a+b>0$. Then the corresponding quantile function is, 
\begin{equation}
\label{eq:canonical_quantile_ab}
Q^{*}(u)=
\begin{cases}
K_1\,u\,(a+bu)^{a/b-1}, & b\neq 0,\\[4pt]
K_1\,u\,e^{u}, & b=0,
\end{cases}
\qquad 0<u<1.
\end{equation}
where $K_1>0$ is a constant.
Moreover, $Q^{*}(u)$ is strictly increasing on $(0,1)$, satisfies $Q^{*}(0)=0$, and has finite upper endpoint
\begin{equation*}
Q^{*}(1^-)=
\begin{cases}
K_1\,(a+b)^{a/b-1}, & b\neq 0,\\[4pt]
K_1\,e, & b=0.
\end{cases}
\end{equation*}
\end{theorem}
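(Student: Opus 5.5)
The plan is to specialise \Cref{thm:general_rational_mrq} to $c=0$, $d=a+b$, and to treat $b=0$ separately through \Cref{prop:qepf_recovers_Q}.

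\textbf{Case $b\neq 0$.} The hypotheses of \Cref{thm:general_rational_mrq} hold: $a+bu>0$ and $(a+b)u>0$ on $(0,1)$, with $a+b\neq0$ and $b\neq0$. Substituting $c=0$, $d=a+b$ into \eqref{eq:mobius_xi_eta} gives
\[
\xi=\frac{(a+b)-(a+b)}{a+b}=0,
\qquad
\eta=-\frac{ab-a(a+b)+b^2}{b(a+b)}=-\frac{b^2-a^2}{b(a+b)}=\frac{a-b}{b}=\frac ab-1 .
\]
Then \eqref{eq:mobius_Q_general} reads $Q(u)=K(a+b)u(a+bu)^{a/b-1}$. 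Setting $K_1=K(a+b)>0$ yields the stated form.

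\textbf{Case $b=0$.} Here $P(u)=1/u$, so \Cref{thm:general_rational_mrq} does not apply. Instead use \Cref{prop:qepf_recovers_Q} directly. We have $(1-u)P(u)=(1-u)/u$ and
\[
\int_{u_0}^u\frac{t}{1-t}\,dt=-u-\log(1-u)+\text{const}.
\]
Hence
\[
Q(u)=C'\,\frac{u}{1-u}\,e^{u}(1-u)=K_1ue^{u}.
\]
This is also the $b\to0$ limit of the first case, since $(a+bu)^{a/b-1}\sim a^{a/b-1}e^{u}$; the constant is absorbed into $K_1$.

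\textbf{Remaining properties.} $Q^*(0)=0$ is immediate, and the upper endpoint follows by letting $u\to1$. For strict monotonicity with $b\neq0$, the logarithmic derivative is
\[
\frac{q}{Q}=\frac1u+\Bigl(\frac ab-1\Bigr)\frac{b}{a+bu}=\frac{(a+bu)+(a-b)u}{u(a+bu)}=\frac{a+au}{u(a+bu)}>0,
\]
since $a>0$ and $a+bu>0$. For $b=0$, $(ue^{u})'=(1+u)e^u>0$.

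\textbf{Main obstacle.} The only delicate point is whether the theorem should also assert that this $Q^*$ actually has the given $P$, i.e.\ the converse direction. The converse part of \Cref{thm:general_rational_mrq} covers this, because $\xi=0<1$ and strict monotonicity has just been shown. For $b=0$, one checks directly that $\int_u^1 pe^p\,dp=(1-u)e^u$, which gives $P(u)=1/u$. I expect the $\eta$ simplification and this verification to be the only nonroutine steps.
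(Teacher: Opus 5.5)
Your proof is correct and is essentially the paper's argument. The paper substitutes $P$ directly into \eqref{eq:logderiv} to get $q^{*}/Q^{*}=1/u+(a-b)/(a+bu)$ (or $1+1/u$ when $b=0$) and integrates; you reach the same expressions by specializing \Cref{thm:general_rational_mrq} with $\xi=0$, $\eta=a/b-1$, and for $b=0$ by using \Cref{prop:qepf_recovers_Q}, both of which are pre-integrated forms of that identity, and your added check that $Q^{*}$ reproduces the given $P$ goes slightly beyond what the paper verifies.
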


\begin{proof}
Suppose that $b\neq 0$. From \eqref{eq:logderiv},
\begin{equation*}
\frac{q^{*}(u)}{Q^{*}(u)}=
\frac{P^{*}(u)-1-(1-u){P^{*}}'(u)}{(1-u)P^{*}(u)}.
\end{equation*}
With
\begin{equation*}
P(u)=\frac{a+bu}{(a+b)u},
\end{equation*}
a direct calculation gives
\begin{equation*}
\frac{q^{*}(u)}{Q^{*}(u)}=\frac{a(u+1)}{u(a+bu)}
=\frac{1}{u}+\frac{a-b}{a+bu}.
\end{equation*}
Integrating,
\begin{equation*}
\log Q^{*}(u)=\log u+\left(\frac{a}{b}-1\right)\log(a+bu)+\log K,
\end{equation*}
for some $K>0$, and hence
\begin{equation*}
Q^{*}(u)=K_1\,u\,(a+bu)^{a/b-1}.
\end{equation*}

\medskip
\noindent
If $b=0$, then $P(u)=1/u$, and \eqref{eq:logderiv} gives
\begin{equation*}
\frac{q^{*}(u)}{Q^{*}(u)}=1+\frac{1}{u}.
\end{equation*}
Thus
\begin{equation*}
\log Q^{*}(u)=u+\log u+\log K_1,
\end{equation*}
so
\begin{equation*}
Q^{*}(u)=K_1\,u\,e^u.
\end{equation*}
In both cases, $Q^{*}(u)\to 0$ as $u\to 0$, so $Q^{*}(0)=0$. Also,
\begin{equation*}
\frac{q^{*}(u)}{Q^{*}(u)}>0,\qquad 0<u<1,
\end{equation*}
since $a>0$ and $a+bu>0$ on $(0,1)$. Hence $Q^{*}$ is strictly increasing on $(0,1)$. The endpoint formula follows by letting $u\to 1$.
\end{proof}

\subsection{Mean normalization and one-shape reparameterization}
\label{sec:mean_reparam}

The multiplicative constant $K>0$ acts only as a scale parameter, so it is
natural to determine it through the mean condition
\begin{equation*}
\int_0^1 Q^{*}(u)\,du=\mu>0.
\end{equation*}
We then introduce the shape parameter
\begin{equation}
\label{eq:lbl_gamma}
\gamma:=\frac{b}{a},\qquad \gamma>-1.
\end{equation}
The restriction $\gamma>-1$ is equivalent to the conditions $a>0$ and
$a+b>0$.

\noindent
From this point onward, we denote the mean-normalized canonical quantile function by $Q^{*}(u)$.

\begin{proposition}
\label{prop:mean_normalized_family}
Under the canonical QEPF specification and the reparameterization \eqref{eq:lbl_gamma}, the quantile function can be written as

\begin{equation}
\label{eq:canonical_quantile_gamma}
Q^{*}(u)=
\begin{cases}
\mu(1+\gamma)\,u\,(1+\gamma u)^{1/\gamma-1}, & \gamma\neq 0,\\[4pt]
\mu\,u\,e^u, & \gamma=0,
\end{cases}
\qquad 0<u<1.
\end{equation}
Its upper endpoint is
\begin{equation}
\label{eq:canonical_endpoint_gamma}
\omega:=Q^{*}(1^-)=
\begin{cases}
\mu(1+\gamma)^{1/\gamma}, & \gamma\neq 0,\\[4pt]
\mu e, & \gamma=0,
\end{cases}
\end{equation}
and
\begin{equation}
\label{eq:q_over_Q_gamma}
\frac{q^{*}(u)}{Q^{*}(u)}=\frac{1+u}{u(1+\gamma u)},
\qquad 0<u<1.
\end{equation}
In particular, \(q^{*}(u)>0\) on \((0,1)\) for all \(\gamma>-1\).
\end{proposition}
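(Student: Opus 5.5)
We start from \Cref{thm:closed_form_quantile} and substitute $b=\gamma a$ with $a>0$. For $\gamma\neq 0$ we have $a/b-1=1/\gamma-1$ and $a+bu=a(1+\gamma u)$. Therefore
\[
K_1\,u\,(a+bu)^{a/b-1}=K_1 a^{1/\gamma-1}\,u\,(1+\gamma u)^{1/\gamma-1}.
\]
The factor $a^{1/\gamma-1}$ is a positive constant and can be absorbed into a new scale constant $C>0$. So $Q^{*}(u)=C\,u\,(1+\gamma u)^{1/\gamma-1}$ for $\gamma\neq0$, and $Q^{*}(u)=C\,u\,e^u$ for $\gamma=0$. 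This also shows that only $\gamma$ and the scale matter, since $P$ depends on $(a,b)$ only through $b/a$.

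Next we fix $C$ through the mean condition $\int_0^1 Q^{*}(u)\,du=\mu$. The key observation is that $Q^{*}$ has an explicit antiderivative. From the proof of \Cref{thm:general_rational_mrq} (or directly from \Cref{prop:general_class_measures}), the function
\[
G(u)=(1-u)P(u)Q(u)=\int_u^1 Q(p)\,dp
\]
satisfies $G(0^+)=\mu$. With $P(u)=(1+\gamma u)/((1+\gamma)u)$ we get
\[
(1-u)P(u)Q^{*}(u)=\frac{C}{1+\gamma}(1-u)(1+\gamma u)^{1/\gamma},
\]
which tends to $C/(1+\gamma)$ as $u\to0$. Hence $C=\mu(1+\gamma)$.

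As a sanity check we will differentiate directly. The derivative
\[
\frac{d}{du}\bigl[(1-u)(1+\gamma u)^{1/\gamma}\bigr]=-(1+\gamma u)^{1/\gamma-1}(1+\gamma)u,
\]
so the integral identity is confirmed without relying on the converse part of \Cref{thm:general_rational_mrq}. The $\gamma=0$ case is the same computation: the derivative of $(1-u)e^u$ is $-ue^u$, giving $C=\mu$. Equivalently, the $\gamma=0$ case is the limit $\gamma\to0$ of the $\gamma\neq0$ formula, using $(1+\gamma u)^{1/\gamma}\to e^u$.

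The endpoint \eqref{eq:canonical_endpoint_gamma} follows by letting $u\to1^-$ in \eqref{eq:canonical_quantile_gamma}:
\[
\mu(1+\gamma)(1+\gamma)^{1/\gamma-1}=\mu(1+\gamma)^{1/\gamma}.
\]
For \eqref{eq:q_over_Q_gamma} we logarithmically differentiate \eqref{eq:canonical_quantile_gamma}:
\[
\frac{q^{*}}{Q^{*}}=\frac1u+\frac{(1-\gamma)}{1+\gamma u}=\frac{1+\gamma u+(1-\gamma)u}{u(1+\gamma u)}=\frac{1+u}{u(1+\gamma u)}.
\]
This matches $\frac1u+\frac{a-b}{a+bu}$ from \Cref{thm:closed_form_quantile} after dividing by $a$, and for $\gamma=0$ it reduces to $1+1/u$. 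Positivity of $q^{*}$ then follows because $1+\gamma u>\min(1,1+\gamma)>0$ for $\gamma>-1$ and $u\in(0,1)$, while $Q^{*}>0$ on $(0,1)$.

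The only delicate point is the mean normalization. One must confirm that the boundary term at $u=1$ vanishes and that the integral is finite, which is immediate here since $(1-u)(1+\gamma u)^{1/\gamma}\to0$ and $1+\gamma>0$. One must also handle $\gamma<0$ carefully, where $1/\gamma-1<0$, but $1+\gamma u$ stays bounded away from zero, so no singularity arises.
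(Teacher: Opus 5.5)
Your proof is correct and follows the paper's route: substitute $b=\gamma a$, absorb $a^{1/\gamma-1}$ into the scale constant, fix that constant by the mean condition, then let $u\to 1^-$ and log-differentiate. The one refinement is that you justify $\int_0^1 u(1+\gamma u)^{1/\gamma-1}\,du=1/(1+\gamma)$, which the paper only asserts, via the explicit antiderivative $-(1-u)(1+\gamma u)^{1/\gamma}/(1+\gamma)$ obtained from $(1-u)P^{*}(u)Q^{*}(u)$; this also confirms that the derived $Q^{*}$ really has the prescribed QEPF, and $1+\gamma u>\min(1,1+\gamma)$ should read $\ge$, since equality holds at $\gamma=0$.
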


\begin{proof}
Suppose first that $\gamma\neq 0$, so that $b=\gamma a$ with $a>0$. From
\Cref{thm:closed_form_quantile},
\begin{equation*}
Q^{*}(u)=K\,u\,(a+bu)^{a/b-1}
=K_2\,a^{1/\gamma-1}u(1+\gamma u)^{1/\gamma-1}.
\end{equation*}
Writing $C=K_2\,a^{1/\gamma-1}$ gives
\begin{equation*}
Q^{*}(u)=C\,u(1+\gamma u)^{1/\gamma-1}.
\end{equation*}
Since
\begin{equation*}
\int_0^1 u(1+\gamma u)^{1/\gamma-1}\,du=\frac{1}{1+\gamma},
\qquad \gamma>-1,\ \gamma\neq 0,
\end{equation*}
the choice $C=\mu(1+\gamma)$ yields \eqref{eq:canonical_quantile_gamma} for $\gamma\neq 0$.

\medskip
\noindent
If $\gamma=0$, then $b=0$, and \Cref{thm:closed_form_quantile} gives \qquad
$\displaystyle
Q^{*}(u)=K_2ue^u.\\
$
\noindent
Since
$\displaystyle
\int_0^1 ue^u\,du=1,
$
taking $K=\mu$ gives the second line of \eqref{eq:canonical_quantile_gamma}.

\medskip
\noindent
The endpoint formula \eqref{eq:canonical_endpoint_gamma} follows by letting $u\to 1$. Finally, substituting $b=\gamma a$ into the log-derivative identity gives
\begin{equation*}
\frac{q^{*}(u)}{Q^{*}(u)}=\frac{1+u}{u(1+\gamma u)},
\end{equation*}
which is \eqref{eq:q_over_Q_gamma}. Since $\gamma>-1$, one has $1+\gamma u>0$ on $(0,1)$, so \(q^{*}(u)>0\) throughout.
\end{proof}

\noindent
Thus the canonical family has bounded support for every $\gamma>-1$.

\subsection{The induced distribution family: structure and regimes}
\label{subsec:family_summary}

The canonical specification yields a two-parameter family indexed by
$(\mu,\gamma)$ with $\mu>0$ and $\gamma>-1$. The model is defined directly
through its quantile function $Q^{*}(u)$ and quantile density
$q^{*}(u)$. 
\begin{equation}
\label{eq:density_from_quantile_density}
f(Q^{*}(u))=\frac{1}{q^{*}(u)}.
\end{equation}

\begin{proposition}
\label{prop:family_regimes}
Let $\gamma>-1$. The canonical family satisfies the following properties.

\begin{enumerate}[label=(\roman*),leftmargin=*]

\item\textbf{Quantile-domain identities.}
For $0<u<1$,
\begin{equation*}
P^{*}(u)=\frac{1+\gamma u}{(1+\gamma)u},\qquad
V^{*}(u)=P^{*}(u)Q^{*}(u),\qquad
M^{*}(u)=(P^{*}(u)-1)Q^{*}(u).
\end{equation*}

\item\textbf{Support and endpoint.}
The upper endpoint is finite and is given by
\begin{equation*}
\omega:=Q^{*}(1^-)=
\begin{cases}
\mu(1+\gamma)^{1/\gamma}, & \gamma\neq 0,\\
\mu e, & \gamma=0.
\end{cases}
\end{equation*}

\item\textbf{Quantile density and monotonicity.}
The quantile density is
\begin{equation}
\label{eq:q_gamma_explicit}
q^{*}(u)=
\begin{cases}
\mu(1+\gamma)(1+u)(1+\gamma u)^{1/\gamma-2}, & \gamma\neq 0,\\[4pt]
\mu e^u(1+u), & \gamma=0,
\end{cases}
\qquad 0<u<1.
\end{equation}
It is increasing for $-1<\gamma<1$, constant for $\gamma=1$, and decreasing for
$\gamma>1$. Consequently, the density $f$ is monotone.
\end{enumerate}
\end{proposition}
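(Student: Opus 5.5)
Parts (i) and (ii) are restatements of earlier results, so the real work is in (iii). For (i), I substitute $b=\gamma a$ (with $a>0$) into the canonical form of \Cref{thm:mobius_to_canonical}:
\[
P(u)=\frac{a+\gamma a u}{(a+\gamma a)u}=\frac{1+\gamma u}{(1+\gamma)u}.
\]
The identity $V^{*}=P^{*}Q^{*}$ is just the definition \eqref{eq:qepf_def}, and $M^{*}=(P^{*}-1)Q^{*}$ is \Cref{lem:qepf_mrq_identity}. For (ii), the endpoint formula is exactly \eqref{eq:canonical_endpoint_gamma} from \Cref{prop:mean_normalized_family}, so I only cite it.

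For the explicit quantile density in (iii), I avoid differentiating directly and use the log-derivative identity \eqref{eq:q_over_Q_gamma}, namely $q^{*}=Q^{*}\cdot(1+u)/\{u(1+\gamma u)\}$.
\begin{itemize}
\item For $\gamma\neq0$, multiplying by $Q^{*}(u)=\mu(1+\gamma)u(1+\gamma u)^{1/\gamma-1}$ cancels the factor $u$ and lowers the exponent by one, giving $\mu(1+\gamma)(1+u)(1+\gamma u)^{1/\gamma-2}$.
\item For $\gamma=0$, the same step gives $\mu e^{u}(1+u)$.
\end{itemize}

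For the monotonicity of $q^{*}$, I take logarithms and differentiate. When $\gamma\neq0$,
\[
\frac{d}{du}\log q^{*}(u)=\frac{1}{1+u}+\frac{1-2\gamma}{1+\gamma u}
=\frac{(1+\gamma u)+(1-2\gamma)(1+u)}{(1+u)(1+\gamma u)}
=\frac{(1-\gamma)(2+u)}{(1+u)(1+\gamma u)}.
\]
The denominator is positive because $\gamma>-1$ gives $1+\gamma u>0$ on $(0,1)$, and $2+u>0$. So the sign of the derivative is the sign of $1-\gamma$. This makes $q^{*}$ strictly increasing for $\gamma<1$ (with $\gamma\neq0$), constant for $\gamma=1$, and strictly decreasing for $\gamma>1$. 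For $\gamma=0$ I have $\frac{d}{du}\log q^{*}=1+\frac{1}{1+u}>0$ directly; this also agrees with setting $\gamma=0$ in the displayed formula. Finding the factorization of the numerator as $(1-\gamma)(2+u)$ is the only step needing care, so I expect it to be the main (mild) obstacle.

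For the density, I use \eqref{eq:density_from_quantile_density}. Since $Q^{*}$ is strictly increasing and continuous, $F=(Q^{*})^{-1}$ is increasing on $(0,\omega)$, and $f(x)=1/q^{*}(F(x))$. Hence $f$ is a composition of an increasing map with the reciprocal of a monotone function, so it is monotone. Specifically, $f$ is decreasing for $-1<\gamma<1$, uniform (constant) for $\gamma=1$, and increasing for $\gamma>1$.
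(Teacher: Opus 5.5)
Your proposal is correct and follows essentially the same route as the paper: parts (i) and (ii) are read off from the preceding construction, and (iii) rests on the same factorization $\frac{d}{du}\log q^{*}(u)=\frac{(1-\gamma)(2+u)}{(1+u)(1+\gamma u)}$ combined with $f(Q^{*}(u))=1/q^{*}(u)$. The differences are only cosmetic: you obtain $q^{*}$ from \eqref{eq:q_over_Q_gamma} instead of differentiating $Q^{*}$ directly, you treat $\gamma=0$ explicitly where the paper leaves it implicit, and you also state the direction of monotonicity of $f$ (decreasing for $\gamma<1$, constant for $\gamma=1$, increasing for $\gamma>1$).
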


\begin{proof}
The proof for (i) and (ii) are direct. To prove (iii), differentiating \eqref{eq:canonical_quantile_gamma} gives \eqref{eq:q_gamma_explicit}. For $\gamma\neq 0$,
\begin{equation*}
\frac{d}{du}\log q^{*}(u)
=
\frac{1}{1+u}+\frac{1-2\gamma}{1+\gamma u}
=
\frac{(1-\gamma)(u+2)}{(1+u)(1+\gamma u)}.
\end{equation*}
Since $\gamma>-1$, the denominator is positive on $(0,1)$, so the sign is determined by $1-\gamma$. Hence $q^{*}(u)$ is increasing for $-1<\gamma<1$, constant for $\gamma=1$, and decreasing for $\gamma>1$. By \eqref{eq:density_from_quantile_density}, the density $f$, whenever it exists, is therefore monotone.
\end{proof}

\section{Quantile-based reliability and inequality measures}
\label{sec:measures}

This section collects the principal quantile-domain functionals of the
canonical M\"obius QEPF family and summarizes their basic shape properties.

\subsection[P(u), V(u) and M(u)]{Definitions of $P^{*}(u)$, $V^{*}(u)$, and $M^{*}(u)$}

For the canonical family,
\begin{equation}
\label{eq:P_gamma}
P^{*}(u)=\frac{1+\gamma u}{(1+\gamma)u},
\qquad 0<u<1.
\end{equation}
Hence $P^{*}(u)$ is strictly decreasing on $(0,1)$, with
\begin{equation*}
P^{*}(u)\to\infty \quad \text{as } u\to 0,
\qquad
P^{*}(u)\to 1 \quad \text{as } u\to 1.
\end{equation*}
It may be interpreted as the ratio of the average outcome in the upper
$(1-u)$ fraction to the threshold level $Q^{*}(u)$ \citep{qepf2026}.
The corresponding tail mean is
\begin{equation}
\label{eq:V_gamma}
V^{*}(u)=P^{*}(u)Q^{*}(u)=
\begin{cases}
\mu(1+\gamma u)^{1/\gamma}, & \gamma\neq 0,\\[4pt]
\mu e^u, & \gamma=0,
\end{cases}
\qquad 0<u<1.
\end{equation}
The mean residual quantile function is
\begin{equation}
\label{eq:M_gamma}
M^{*}(u)=(P^{*}(u)-1)Q^{*}(u)=
\begin{cases}
\mu(1-u)(1+\gamma u)^{1/\gamma-1}, & \gamma\neq 0,\\[4pt]
\mu(1-u)e^u, & \gamma=0,
\end{cases}
\qquad 0<u<1.
\end{equation}

\noindent
Thus $M^{*}(u)$ represents the average excess above the threshold $Q^{*}(u)$ within the upper $(1-u)$ fraction. Its shape depends on $\gamma$: if $\gamma\ge 0$, then
$M^{*}(u)$ is strictly decreasing on $(0,1)$, whereas if $-1<\gamma<0$, then $M^{*}(u)$ is unimodal with unique maximum at
$\displaystyle
u_M^*=-\gamma\in(0,1).
$
Accordingly, the canonical family admits either monotone or single-peaked mean
residual quantile behaviour, depending on the sign of $\gamma$.

\subsection{Hazard--quantile function}
\label{subsec:hazard_quantile}

The hazard--quantile function introduced in quantile-based reliability analysis
\citep{NairSankaranBalakrishnan2013} is
\begin{equation}
\label{eq:H_def}
H^{*}(u)=\frac{1}{(1-u)q^{*}(u)},
\qquad 0<u<1.
\end{equation}
For the present family,
\begin{equation}
\label{eq:H_gamma}
H^{*}(u)=
\begin{cases}
\dfrac{(1+\gamma u)^{2-1/\gamma}}{\mu(1+\gamma)(1-u^2)}, & \gamma\neq 0,\\[10pt]
\dfrac{1}{\mu e^u(1-u^2)}, & \gamma=0,
\end{cases}
\qquad 0<u<1.
\end{equation}

\noindent
Its qualitative shape is governed by $\gamma$. If $\gamma\ge 1/2$, then
$H^{*}(u)$ is strictly increasing on $(0,1)$. If $-1<\gamma<1/2$, then $H^{*}(u)$ is
bathtub-shaped with unique minimum at
$\displaystyle
u_H^*=-1+\sqrt{2(1-\gamma)}\in(0,1).
$

\noindent
The functions $H^{*}(u)$, $P^{*}(u)$, and $M^{*}(u)$ for selected values of
$\gamma$ are displayed in \Cref{fig:Xplots}. Throughout the figure, we set
$\mu=1$, since this affects only scale.

\def\umin{0.02}
\def\umax{0.98}
\def\gammalist{-0.75,-0.5,0,1}

\begin{figure}[ht]
\centering
\begin{tikzpicture}
\begin{groupplot}[
  group style={group size=3 by 1, horizontal sep=1.0cm, vertical sep=1.0cm},
  width=0.37\textwidth,
  height=0.40\textwidth,
  domain=\umin:\umax,
  samples=220,
  cycle list name=mobiusCycle,
  legend columns=4,
  legend style={
    /tikz/every even column/.append style={column sep=6pt},
    font=\scriptsize
  },
  legend image post style={mark=none}
]


\nextgroupplot[
  title={$H^{*}(u)$},
  title style={at={(0.5,0.975)},anchor=south},
  ymin=0,
  ymax=10,
  legend to name=legXplots
]
\foreach \gam in \gammalist {
  \addplot+[no marks] {Hx(x,\gam)};
  \addlegendentryexpanded{$\gamma=\pgfmathprintnumber{\gam}$}
}

\nextgroupplot[
  title={$P^{*}(u)$},
  title style={at={(0.5,0.975)},anchor=south},
  ymode=log
]
\foreach \gam in \gammalist {
  \addplot+[no marks] {Px(x,\gam)};
}

\nextgroupplot[
  title={$M^{*}(u)$},
  title style={at={(0.5,0.975)},anchor=south}
]
\foreach \gam in \gammalist {
  \addplot+[no marks] {Mx(x,\gam)};
}

\end{groupplot}

\node[anchor=north] at (group c2r1.south) [yshift=-0.85cm]
{\pgfplotslegendfromname{legXplots}};

\end{tikzpicture}

\caption{Plots of $H^{*}(u)$, $P^{*}(u)$, and $M^{*}(u)$ for $\mu=1$ and selected
values of $\gamma>-1$.}

\label{fig:Xplots}
\end{figure}

%
%

\section{Characterizations of the class of distributions}
\label{sec:characterizations}

This section presents several equivalent descriptions of the canonical family.
We begin with a differential characterization in the quantile domain, followed
by a scale-free functional identity involving $H^{*}(u)$, $M^{*}(u)$, and $P^{*}(u)$. We then show that a logarithmic upper-tail transformation maps the family into a
fractional-linear hazard--quantile class. The final subsection studies the
L-moment ratio structure of the family and its relation to standard benchmark
curves.

\begin{theorem}
\label{thm:elasticity}
Fix $\gamma>-1$. A positive $C^{1}$ increasing quantile function $Q^{*}$ satisfies
\begin{equation}
\label{eq:elasticity}
\frac{u\,q^{*}(u)}{Q^{*}(u)}=\frac{1+u}{1+\gamma u},
\qquad 0<u<1,
\end{equation}
if and only if, for some $\mu>0$,
\begin{equation}
\label{eq:elasticity_family}
Q^{*}(u)=
\begin{cases}
\mu(1+\gamma)\,u(1+\gamma u)^{1/\gamma-1}, & \gamma\neq 0,\\[4pt]
\mu\,u\,e^u, & \gamma=0.
\end{cases}
\qquad 0<u<1.
\end{equation}
\end{theorem}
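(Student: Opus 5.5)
The plan is to treat \eqref{eq:elasticity} as a first-order linear ODE for $\log Q^{*}$ and integrate it explicitly. For the "if" direction we differentiate the candidate \eqref{eq:elasticity_family} directly. The \eqref{eq:q_over_Q_gamma} part of \Cref{prop:mean_normalized_family} already gives $q^{*}(u)/Q^{*}(u)=(1+u)/\{u(1+\gamma u)\}$ for every $\gamma>-1$, including $\gamma=0$. Multiplying by $u$ yields \eqref{eq:elasticity}. Positivity and the $C^1$/increasing property are also recorded there, since $q^{*}>0$.

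For the "only if" direction, let $Q^{*}$ be positive, $C^1$ and increasing, and suppose it satisfies \eqref{eq:elasticity}. Because $Q^{*}>0$, the function $\log Q^{*}$ is $C^1$ on $(0,1)$, and
\begin{equation*}
\frac{d}{du}\log Q^{*}(u)=\frac{1+u}{u(1+\gamma u)}.
\end{equation*}
We then split the right-hand side into partial fractions, case by case.

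\textbf{Case $\gamma\neq 0$.} Here $\frac{1+u}{u(1+\gamma u)}=\frac1u+\frac{1-\gamma}{1+\gamma u}$, which can be checked by clearing denominators. Integrating on $(0,1)$, where $1+\gamma u>0$ because $\gamma>-1$, gives
\begin{equation*}
\log Q^{*}(u)=\log u+\Bigl(\tfrac1\gamma-1\Bigr)\log(1+\gamma u)+\log C
\end{equation*}
for a constant $C>0$.

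\textbf{Case $\gamma=0$.} The derivative is $1/u+1$, so $\log Q^{*}(u)=\log u+u+\log C$.

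Exponentiating, $Q^{*}(u)=C\,u(1+\gamma u)^{1/\gamma-1}$ or $Q^{*}(u)=C\,u e^u$, respectively. Since two $C^1$ functions with equal derivatives on the connected interval $(0,1)$ differ by a constant, these are the only solutions.

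It remains to rewrite $C$ in terms of $\mu$. Define $\mu=C/(1+\gamma)$ when $\gamma\neq0$ and $\mu=C$ when $\gamma=0$; in both cases $\mu>0$ because $C>0$ and $1+\gamma>0$. This matches \eqref{eq:elasticity_family}. Using $\int_0^1u(1+\gamma u)^{1/\gamma-1}du=1/(1+\gamma)$ and $\int_0^1ue^u\,du=1$ from the proof of \Cref{prop:mean_normalized_family}, $\mu$ is exactly the mean.

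There is no real obstacle. The only care points are the partial-fraction identity and checking that the logarithms are well defined on all of $(0,1)$, which holds since $1+\gamma u>0$. One should also note that the integration constant is free, so the characterization is up to the scale $\mu$, consistent with \Cref{prop:qepf_recovers_Q}.
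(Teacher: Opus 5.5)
Your proposal is correct and follows essentially the same route as the paper. Both rewrite \eqref{eq:elasticity} as $(\log Q^{*})'(u)=1/u+(1-\gamma)/(1+\gamma u)$, or $1/u+1$ when $\gamma=0$, integrate, and absorb the constant as $\mu(1+\gamma)$ or $\mu$, with the converse obtained by differentiation. Citing \eqref{eq:q_over_Q_gamma} for the converse, noting uniqueness up to a constant on the connected interval, and identifying $\mu$ as the mean are harmless refinements of the same argument.
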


\begin{proof}
Assume \eqref{eq:elasticity}. Then
\begin{equation*}
\frac{q^{*}(u)}{Q^{*}(u)}
=
\frac{1+u}{u(1+\gamma u)}
=
\frac{1}{u}+\frac{1-\gamma}{1+\gamma u}.
\end{equation*}
If $\gamma\neq 0$, integration gives
\begin{equation*}
\log Q^{*}(u)
=
\log u+\left(\frac{1}{\gamma}-1\right)\log(1+\gamma u)+C,
\end{equation*}
where $C\in\mathbb{R}$. Hence
\begin{equation*}
Q^{*}(u)=K\,u(1+\gamma u)^{1/\gamma-1},
\qquad K=e^C>0.
\end{equation*}
Writing $K=\mu(1+\gamma)$ gives the first line of \eqref{eq:elasticity_family}.

\noindent
If $\gamma=0$, then \eqref{eq:elasticity} becomes
\begin{equation*}
\frac{q^{*}(u)}{Q^{*}(u)}=\frac{1+u}{u}=1+\frac{1}{u}.
\end{equation*}
Integrating yields
\begin{equation*}
\log Q^{*}(u)=u+\log u + C,
\end{equation*}
so
\begin{equation*}
Q^{*}(u)=K\,u\,e^u.
\end{equation*}
Writing $K=\mu$ gives the second line of \eqref{eq:elasticity_family}. Conversely, differentiating \eqref{eq:elasticity_family} verifies
\eqref{eq:elasticity}.
\end{proof}

\begin{theorem}
\label{thm:HMoverP_characterization}
Let $X$ be a non-negative random variable with an absolutely continuous
distribution and quantile function $Q^{*}$. Then

\begin{equation}
\label{eq:P_form}
P^{*}(u)=\frac{1+\gamma u}{(1+\gamma)u},
\qquad 0<u<1.
\end{equation}

for all $u\in(0,1)$ if and only if, for some $\mu>0$ and $\gamma>-1$,
\begin{equation}
\label{eq:family_quantile}
Q^{*}(u)=
\begin{cases}
\mu(1+\gamma)\,u\,(1+\gamma u)^{1/\gamma-1}, & \gamma\neq 0,\\[4pt]
\mu\,u\,e^u, & \gamma=0,
\end{cases}
\qquad 0<u<1.
\end{equation}
\end{theorem}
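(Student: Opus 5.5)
The plan is to prove the two directions separately, routing the forward direction through the recovery formula of \Cref{prop:qepf_recovers_Q} (equivalently the log-derivative identity \eqref{eq:logderiv}) and the elasticity characterization of \Cref{thm:elasticity}.

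\emph{Forward direction.} Assume \eqref{eq:P_form}. Substitute $P^{*}(u)=(1+\gamma u)/((1+\gamma)u)$ into \eqref{eq:logderiv}. We need $(1+\gamma){P^{*}}'(u) = -1/u^{2}$. Then
\[
P^{*}(u)-1-(1-u){P^{*}}'(u)=\frac{(1+\gamma u)u-(1+\gamma)u^2+(1-u)}{(1+\gamma)u^2}=\frac{1-u^2}{(1+\gamma)u^2}.
\]
Dividing by $(1-u)P^{*}(u)$ gives
\[
\frac{q^{*}(u)}{Q^{*}(u)}=\frac{1+u}{u(1+\gamma u)},
\]
which is exactly the elasticity condition \eqref{eq:elasticity}. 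Before integrating, I note that absolute continuity is needed to justify differentiating $\int_u^1 Q^{*}$, and that $Q^{*}>0$ is needed to take logarithms. Positivity and $\gamma>-1$ should be recorded explicitly: $P^{*}>0$ near $u=1$ forces $1+\gamma>0$. Then \Cref{thm:elasticity} yields \eqref{eq:family_quantile} for some $\mu>0$.

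\emph{Converse.} Assume $Q^{*}$ has the form \eqref{eq:family_quantile}. I would compute $V^{*}(u)=(1-u)^{-1}\int_u^1 Q^{*}(p)\,dp$ directly by exhibiting an antiderivative. For $\gamma\neq0$, set $G(u)=\mu(1-u)(1+\gamma u)^{1/\gamma}$. Then
\[
G'(u)=\mu(1+\gamma u)^{1/\gamma-1}\bigl[-(1+\gamma u)+(1-u)\bigr]=-\mu(1+\gamma)u(1+\gamma u)^{1/\gamma-1}=-Q^{*}(u).
\]
Since $G(1)=0$, it follows that $\int_u^1 Q^{*}=G(u)$. For $\gamma=0$, take $G(u)=\mu(1-u)e^u$ and argue the same way. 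Consequently $V^{*}(u)=\mu(1+\gamma u)^{1/\gamma}$, and dividing by $Q^{*}$ gives $P^{*}(u)=(1+\gamma u)/((1+\gamma)u)$. This is the same mechanism as the converse part of \Cref{thm:general_rational_mrq} with $a=1$, $b=\gamma$, $c=0$, $d=1+\gamma$. For that specialization the exponents are $\xi=0$ and $\eta=1/\gamma-1$, so that result could alternatively be cited.

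The computations are routine. The main obstacle is the logical bookkeeping of parameters in the forward direction. The statement fixes $\gamma$ in \eqref{eq:P_form} but only asserts existence of $\gamma>-1$ in the conclusion, so I must check that the same $\gamma$ is used and that $\gamma>-1$ follows from the hypotheses. I must also check that the integration constant is positive, which holds because $Q^{*}>0$. The $\gamma=0$ limiting case should be handled by a separate one-line check rather than by a limit argument.
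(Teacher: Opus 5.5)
Your argument is correct up to two small repairs noted at the end. Your forward direction matches the paper, but your converse takes a different route. The paper's proof works entirely on the $P^{*}$ side:
\begin{itemize}
\item It starts from the relation \eqref{eq:q_over_Q_in_terms_of_P}, $q^{*}/Q^{*}=\frac{1+u}{u(1-u)}\cdot\frac{P^{*}-1}{P^{*}}$. This is the scale-free identity $H^{*}M^{*}/P^{*}=u/(1+u)$ rewritten via $H^{*}=1/((1-u)q^{*})$ and $M^{*}=(P^{*}-1)Q^{*}$.
\item It combines this with the general identity \eqref{eq:P_ODE_pre}, obtained by differentiating $V^{*}=P^{*}Q^{*}$.
\item It solves the resulting ODE ${P^{*}}'=-(P^{*}-1)/(u(1-u))$ as $P^{*}-1=C_1(1-u)/u$ and sets $C_1=1/(1+\gamma)$.
\end{itemize}
That proof never argues the implication from \eqref{eq:P_form} to \eqref{eq:family_quantile}. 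It is effectively supplied by \Cref{thm:closed_form_quantile} and \Cref{prop:mean_normalized_family}, which perform the same computation as your forward direction through \eqref{eq:logderiv}.

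So the real difference is the converse. You evaluate $\int_u^1 Q^{*}$ directly via the antiderivative $G(u)=\mu(1-u)(1+\gamma u)^{1/\gamma}$ with $G(1^-)=0$. This is the mechanism the paper itself uses for the converse of \Cref{thm:general_rational_mrq}; here, however, the paper passes through an ODE for $P^{*}$.

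Your route is the tighter proof of the stated equivalence: it keeps the same $\gamma$ on both sides and gives $V^{*}$ in closed form. In the paper's route, the starting relation, evaluated for the model, is equivalent to $(P^{*}-1)/P^{*}=(1-u)/(1+\gamma u)$, i.e.\ to the conclusion itself. The ODE also leaves $C_1$ unmatched to the model's $\gamma$.

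What the ODE route buys instead is a different characterization. It shows that the identity $H^{*}M^{*}/P^{*}=u/(1+u)$ alone forces $P^{*}$ into the canonical family, with $\gamma$ arising as a free integration constant and $C_1>0$ forced by $P^{*}>1$. Your argument does not give that.

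Two repairs are needed.
\begin{enumerate}
\item Positivity of $P^{*}$ near $u=1$ does not force $\gamma>-1$, since $P^{*}(u)\to 1$ as $u\to 1$ for every $\gamma\neq -1$. Instead read off the sign of $1+\gamma$ near $u=0$, where $P^{*}(u)\sim 1/((1+\gamma)u)$. Alternatively, use $P^{*}-1=(1-u)/((1+\gamma)u)$ together with $P^{*}\ge 1$, which holds because $V^{*}\ge Q^{*}$ for nondecreasing $Q^{*}$.
\item Absolute continuity of $F$ is not what justifies differentiating $\int_u^1 Q^{*}$ or the $C^{1}$ hypothesis of \Cref{thm:elasticity}. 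An absolutely continuous $F$ whose support has a gap has a discontinuous $Q^{*}$. Instead bootstrap from $(1-u)P^{*}(u)Q^{*}(u)=\int_u^1 Q^{*}(p)\,dp$: the right side is continuous, hence so is $Q^{*}$; hence the right side is $C^{1}$, hence so is $Q^{*}$.
\end{enumerate}
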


\begin{proof}
For the model \Cref{eq:family_quantile}, we obtain

%

\begin{equation}
\label{eq:q_over_Q_in_terms_of_P}
\frac{q^{*}(u)}{Q^{*}(u)}
=
\frac{1+u}{u(1-u)}\cdot \frac{P^{*}(u)-1}{P^{*}(u)}.
\end{equation}
Now let
\begin{equation*}
A(u)=\int_u^1 Q^{*}(t)\,dt,
\qquad
V^{*}(u)=\frac{A(u)}{1-u}.
\end{equation*}
We get the derivative of $V^{*}(u)$ as
\begin{equation*}
{V^{*}}'(u)=\frac{V^{*}(u)-Q^{*}(u)}{1-u}.
\end{equation*}
Using $V^{*}(u)=P^{*}(u)Q^{*}(u)$ and $V^{*}(u)-Q^{*}(u)=(P^{*}(u)-1)Q^{*}(u)$, we obtain
\begin{equation}
\label{eq:P_ODE_pre}
{P^{*}}'(u)+P^{*}(u)\frac{q^{*}(u)}{Q^{*}(u)}=\frac{P^{*}(u)-1}{1-u}.
\end{equation}
Substituting \eqref{eq:q_over_Q_in_terms_of_P} into \eqref{eq:P_ODE_pre} yields
\begin{equation}
\label{eq:P_ODE}
{P^{*}}'(u)=-\frac{P^{*}(u)-1}{u(1-u)}.
\end{equation}
Solving \Cref{eq:P_ODE}, we obtain

$\displaystyle
P^{*}(u)-1=C_1\frac{1-u}{u},
\qquad C_1>0.
$

\noindent
Writing $C_1=1/(1+\gamma)$ with $\gamma>-1$, we obtain
$\displaystyle
P^{*}(u)=\frac{1+\gamma u}{(1+\gamma)u},
$
which is \eqref{eq:P_form}.
\end{proof}

%
%

\begin{remark}
For the class of quantile functions in \Cref{eq:family_quantile}, we obtain the identity
\begin{equation}
\label{eq:HMoverP_identity}
\frac{H^{*}(u)M^{*}(u)}{P^{*}(u)}=\frac{u}{1+u},
\qquad 0<u<1.
\end{equation}
\end{remark}

\begin{theorem}
\label{thm:tail_log_transform}
Let $X$ follow the $(\mu,\gamma)$ family with $\gamma>-1$ and upper endpoint
\begin{equation*}
\omega=Q^{*}(1^-).
\end{equation*}
Define
\begin{equation*}
Z=-\log\!\left(\frac{X}{\omega}\right)\in[0,\infty).
\end{equation*}
Then the quantile function of $Z$ is
\begin{equation}
\label{eq:QZ_general}
Q_Z(u)=-\log\!\left(\frac{Q^{*}(1-u)}{\omega}\right),
\qquad 0<u<1.
\end{equation}
Moreover,
\begin{equation}
\label{eq:QZ_gamma}
Q_Z(u)=
\begin{cases}
-\log(1-u)
+\dfrac{1-\gamma}{\gamma}
\left[
\log(1+\gamma)-\log\!\bigl(1+\gamma(1-u)\bigr)
\right], & \gamma\neq 0, \\[10pt]
-\log(1-u)+u, & \gamma=0,
\end{cases}
\qquad 0<u<1.
\end{equation}
The corresponding hazard--quantile function is
\begin{equation}
\label{eq:HZ_gamma}
H_Z(u)=\frac{1}{(1-u)Q_Z'(u)}
      =\frac{1+\gamma(1-u)}{2-u},
\qquad 0<u<1.
\end{equation}
In the boundary case $\gamma=0$,
\begin{equation}
\label{eq:MZ_gamma0}
M_Z(u)=\frac{3-u}{2}.
\end{equation}
\end{theorem}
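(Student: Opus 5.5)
The argument has four steps: derive the quantile of $Z$ from monotonicity, substitute the explicit $Q^{*}$, differentiate to get $H_Z$, and integrate to get $M_Z$ at $\gamma=0$.

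\textbf{Quantile function of $Z$.} The map $x\mapsto -\log(x/\omega)$ is strictly decreasing on $(0,\omega]$. Hence, by the standard rule for decreasing transformations,
\[
Q_Z(u)=g\bigl(Q^{*}(1-u)\bigr),\qquad g(x)=-\log(x/\omega).
\]
Explicitly, $P(Z\le z)=P(X\ge \omega e^{-z})=1-F(\omega e^{-z})$. Setting this equal to $u$ and using continuity and strict monotonicity of $Q^{*}$ (from \Cref{prop:mean_normalized_family}) gives \eqref{eq:QZ_general}. Since $Q^{*}(0)=0$ and $Q^{*}(1^-)=\omega$, the support of $Z$ is $[0,\infty)$.

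\textbf{Explicit form of $Q_Z$.} Substitute \eqref{eq:canonical_quantile_gamma} and \eqref{eq:canonical_endpoint_gamma}. For $\gamma\neq 0$,
\[
\frac{Q^{*}(1-u)}{\omega}=(1-u)\,\frac{(1+\gamma)\,\bigl(1+\gamma(1-u)\bigr)^{1/\gamma-1}}{(1+\gamma)^{1/\gamma}}.
\]
The constant collapses to $(1+\gamma)^{1-1/\gamma}$. Taking $-\log$ gives
\[
-\log(1-u)+\frac{1-\gamma}{\gamma}\Bigl[\log(1+\gamma)-\log\bigl(1+\gamma(1-u)\bigr)\Bigr],
\]
which is the first line of \eqref{eq:QZ_gamma}. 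For $\gamma=0$, the ratio is $(1-u)e^{1-u}/e=(1-u)e^{-u}$, giving $-\log(1-u)+u$.

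\textbf{Hazard--quantile function.} Rather than differentiating the logarithms directly, use \eqref{eq:q_over_Q_gamma} and the chain rule:
\[
Q_Z'(u)=\frac{q^{*}(1-u)}{Q^{*}(1-u)}=\frac{2-u}{(1-u)\bigl(1+\gamma(1-u)\bigr)}.
\]
Therefore $(1-u)Q_Z'(u)=(2-u)/\bigl(1+\gamma(1-u)\bigr)$, and inverting yields \eqref{eq:HZ_gamma}. This formula holds uniformly in $\gamma$, including $\gamma=0$.

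\textbf{Mean residual quantile at $\gamma=0$.} Compute
\[
M_Z(u)=\frac{1}{1-u}\int_u^1\bigl(Q_Z(p)-Q_Z(u)\bigr)\,dp
\]
with $Q_Z(p)=-\log(1-p)+p$. Using $\int_u^1 -\log(1-p)\,dp=(1-u)\bigl(1-\log(1-u)\bigr)$ and $\int_u^1 p\,dp=(1-u^2)/2$, we get
\[
V_Z(u)=1-\log(1-u)+\frac{1+u}{2}.
\]
Subtracting $Q_Z(u)=-\log(1-u)+u$ leaves $1+(1-u)/2=(3-u)/2$.

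The only delicate point is the first step: it requires continuity and strict monotonicity of $Q^{*}$ so that the decreasing transform reverses $u\mapsto 1-u$ without ties. Everything after that is routine algebra; the $\log$-integral in the last step is the one computation that needs care.
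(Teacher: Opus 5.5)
Your proof is correct and follows the paper's argument: the decreasing-transform rule for $Q_Z$, direct substitution of $Q^{*}$ and $\omega$, and differentiation of $Q_Z$ to obtain $H_Z$ (your chain-rule route through \eqref{eq:q_over_Q_gamma} is equivalent and covers $\gamma=0$ in one stroke). The only real difference is the last step: the paper does not integrate, but identifies $H_Z(u)=1/(2-u)$ at $\gamma=0$ as the inverse-linear hazard--quantile model and quotes its linear MRQF $(3-u)/2$ from \citet{MidhuSankaranNair2013}, whereas your direct computation of $\int_u^1 Q_Z(p)\,dp$ makes that step self-contained.
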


\begin{proof}
Since $g(x)=-\log(x/\omega)$ is strictly decreasing on $(0,\omega]$, quantiles
under decreasing transformations satisfy
\begin{equation*}
Q_Z(u)=g(Q^{*}(1-u))
      =-\log\!\left(\frac{Q^{*}(1-u)}{\omega}\right),
\qquad 0<u<1,
\end{equation*}
which gives \eqref{eq:QZ_general}.

\medskip
\noindent
For $\gamma\neq 0$, using \eqref{eq:canonical_quantile_gamma} and
\eqref{eq:canonical_endpoint_gamma},
\begin{equation*}
Q^{*}(v)=\mu(1+\gamma)\,v\,(1+\gamma v)^{1/\gamma-1},
\qquad
\omega=\mu(1+\gamma)^{1/\gamma}.
\end{equation*}
Substituting $v=1-u$ gives
\begin{equation*}
\frac{Q^{*}(1-u)}{\omega}
=
(1-u)(1+\gamma)^{1-1/\gamma}
\bigl(1+\gamma(1-u)\bigr)^{1/\gamma-1}.
\end{equation*}
Taking $-\log$ yields the first line of \eqref{eq:QZ_gamma}.

\medskip
\noindent
If $\gamma=0$, then
\begin{equation*}
Q^{*}(v)=\mu v e^v,
\qquad
\omega=\mu e,
\end{equation*}
so
\begin{equation*}
\frac{Q^{*}(1-u)}{\omega}=(1-u)e^{-u},
\end{equation*}
which gives the second line of \eqref{eq:QZ_gamma}. Differentiating \eqref{eq:QZ_gamma}, for $\gamma\neq 0$ we obtain
\begin{equation*}
Q_Z'(u)
=
\frac{1}{1-u}
+\frac{1-\gamma}{1+\gamma(1-u)}
=
\frac{2-u}{(1-u)\{1+\gamma(1-u)\}}.
\end{equation*}
Hence
\begin{equation*}
H_Z(u)=\frac{1}{(1-u)Q_Z'(u)}
      =\frac{1+\gamma(1-u)}{2-u},
\end{equation*}
which is \eqref{eq:HZ_gamma}. The case $\gamma=0$ is included directly in the same formula. Finally, when $\gamma=0$, the transformed model has inverse-linear
hazard--quantile function
\begin{equation*}
H_Z(u)=\frac{1}{2-u},
\end{equation*}
for which the corresponding mean residual quantile function is
\begin{equation*}
M_Z(u)=\frac{3-u}{2}.
\end{equation*}
This gives \eqref{eq:MZ_gamma0}. Since \(H_Z(u)=1/(2-u)\), this is the inverse-linear hazard--quantile model, whose mean residual quantile function is \(M_Z(u)=(3-u)/2\); see \citet{MidhuSankaranNair2013}.
\end{proof}

\begin{remark}
\label{rem:transform_connection}
The hazard--quantile form \eqref{eq:HZ_gamma} belongs to the fractional-linear
hazard--quantile class studied in the quantile-based reliability literature
\citep{SankaranThomasMidhu2015,MidhuSankaranNair2014}. In the boundary case
$\gamma=0$, it reduces to the inverse-linear form
\begin{equation*}
H_Z(u)=\frac{1}{2-u},
\end{equation*}
and \eqref{eq:MZ_gamma0} shows that the corresponding mean residual quantile
function is linear. Thus the transformed family links the proposed model to
earlier linear mean residual quantile constructions
\citep{MidhuSankaranNair2013}.
\end{remark}

\subsection{L-moments and L-moment ratio structure}
\label{sec:lmom}

L-moments provide an alternative set of distributional summaries. L-moments are defined as linear combinations of order statistics and possess an
important structural advantage over conventional power moments in terms of
existence. Specifically, the population L-moments are well defined for any random
variable with a finite mean, and the existence of the first L-moment
\(L_1=\mathbb{E}(X)\) implies the existence of all higher-order L-moments
\(L_r\), \(r\ge 2\) \citep{Hosking1990}. Consequently, the L-moment ratios are well defined whenever the mean exists, even in situations where higher conventional moments may be infinite. This property makes L-moments particularly well suited for characterizing skewness and tail behaviour of distributions with moderate or heavy tails. 


\noindent
For the M\"obius QEPF family defined in \eqref{eq:family_quantile}, the first
L-moment is simply
\begin{equation}
L_1 = \int_0^1 Q^{*}(u)\,du = \mu.
\label{eq:L1_gamma}
\end{equation}

\noindent
Using the endpoint formula \eqref{eq:canonical_endpoint_gamma}, for $\gamma\neq 0$, the second and third L-moments are

\begin{align}
L_2
&= \int_0^1 (2u-1)\,Q^{*}(u)\,du
 = \frac{2(1+\gamma)\omega-(2\gamma+5)\mu}{2\gamma+1}.
\label{eq:L2_gamma}\\
L_3
&= \int_0^1 (6u^2-6u+1)\,Q^{*}(u)\,du \\
&= \mu\frac{6\gamma^2+41\gamma+49}{(2\gamma+1)(3\gamma+1)}
 - \omega\frac{6(\gamma^2+4\gamma+3)}{(2\gamma+1)(3\gamma+1)}.
\label{eq:L3_gamma}
\end{align}

\noindent
In the boundary case $\gamma=0$, these reduce to
\begin{equation}
L_2=\mu(2e-5),
\qquad
L_3=\mu(49-18e).
\end{equation}

\noindent
The fourth L-moment is
\begin{equation}
\label{eq:L4_gamma}
\begin{aligned}
L_4
&= \int_0^1 (20u^3-30u^2+12u-1)\,Q^{*}(u)\,du \\[4pt]
&=
\begin{cases}
\dfrac{
\bigl(24\gamma^3+228\gamma^2+456\gamma+252\bigr)\,\omega
-
\bigl(24\gamma^3+314\gamma^2+897\gamma+685\bigr)\,\mu
}{
(2\gamma+1)(3\gamma+1)(4\gamma+1)
},
& \gamma\neq 0, \\[10pt]
\mu(252e-685), & \gamma=0.
\end{cases}
\end{aligned}
\end{equation}

\begin{remark}
\label{rem:lmoment_singularities}

The closed-form expressions \eqref{eq:L2_gamma}--\eqref{eq:L4_gamma} have
apparent singularities at $\gamma=-1/2$, $-1/3$, and $-1/4$. These are
removable singularities of the algebraic representation rather than of the
underlying L-moments. At these points, the corresponding values are obtained by
continuity from the integral definitions \eqref{eq:L2_gamma}--\eqref{eq:L4_gamma}.
\end{remark}

\noindent
The L-moment ratios are
\begin{equation}
\tau_2=\frac{L_2}{L_1},\qquad
\tau_3=\frac{L_3}{L_2},\qquad
\tau_4=\frac{L_4}{L_2}.
\label{eq:tau_defs}
\end{equation}

\noindent
For $\gamma\neq 0$, write
$\displaystyle
A(\gamma)=(1+\gamma)^{1/\gamma}.
$
Then,
\begin{equation}
\tau_2
=
\frac{2(1+\gamma)A(\gamma)-(2\gamma+5)}{2\gamma+1},
\label{eq:tau2_gamma}
\end{equation}

\noindent
The corresponding expressions for $\gamma\neq 0$ are
\begin{align}
\tau_3
&=
\frac{
6\gamma^2+41\gamma+49
-6A(\gamma)(\gamma^2+4\gamma+3)
}{
(3\gamma+1)\bigl[2(1+\gamma)A(\gamma)-(2\gamma+5)\bigr]
},
\label{eq:tau3_gamma}
\\[0.75em]
\tau_4
&=
\frac{
A(\gamma)\bigl(24\gamma^3+228\gamma^2+456\gamma+252\bigr)
-
\bigl(24\gamma^3+314\gamma^2+897\gamma+685\bigr)
}{
(3\gamma+1)(4\gamma+1)
\bigl[2(1+\gamma)A(\gamma)-(2\gamma+5)\bigr]
}.
\label{eq:tau4_gamma}
\end{align}

\noindent
with boundary values
\begin{equation*}
\tau_2(0)=2e-5,\qquad
\tau_3(0)=\frac{49-18e}{2e-5},\qquad
\tau_4(0)=\frac{252e-685}{2e-5}.
\end{equation*}

\noindent
Thus both $(\tau_2,\tau_3)$ and $(\tau_3,\tau_4)$ lie on smooth
one-parameter curves indexed by $\gamma>-1$, with the special values in
Remark~\ref{rem:lmoment_singularities} interpreted by continuity. In
particular, when $\gamma=1$,
$\displaystyle
Q^{*}(u)=2\mu u,
$
so the family contains the uniform distribution on $(0,2\mu)$ as an exact
member, and the corresponding L-moment ratio point is $(\tau_3,\tau_4)=(0,0)$.


\begin{figure}[ht]
\centering
\begin{minipage}{0.48\textwidth}
\centering
\includegraphics[width=\textwidth]{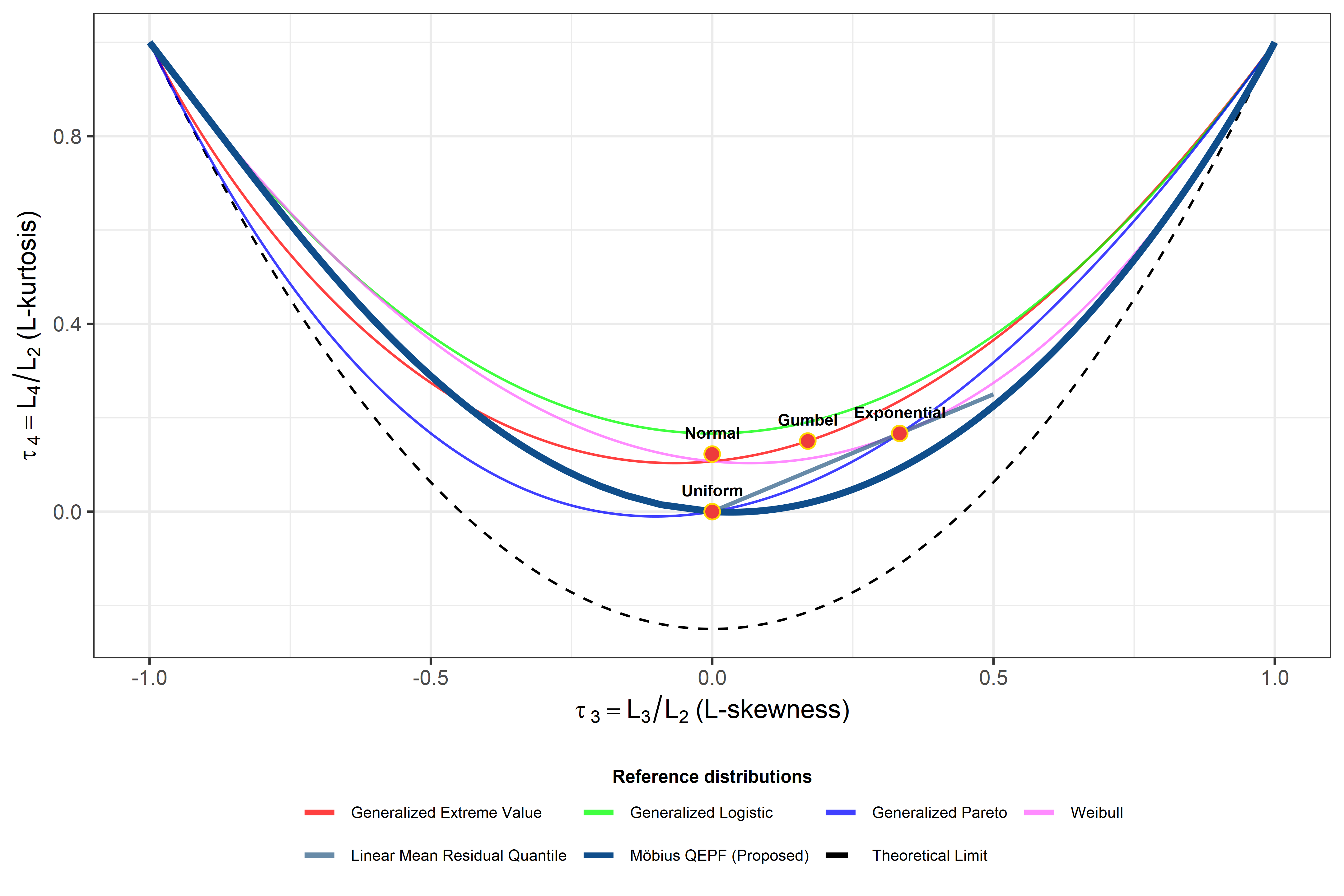}
\caption{L-moment ratio diagram $(\tau_3,\tau_4)$ for the proposed
M\"obius QEPF family together with selected reference families.}
\label{fig:tau3tau4}
\end{minipage}
\hfill
\begin{minipage}{0.48\textwidth}
\centering
\includegraphics[width=\textwidth]{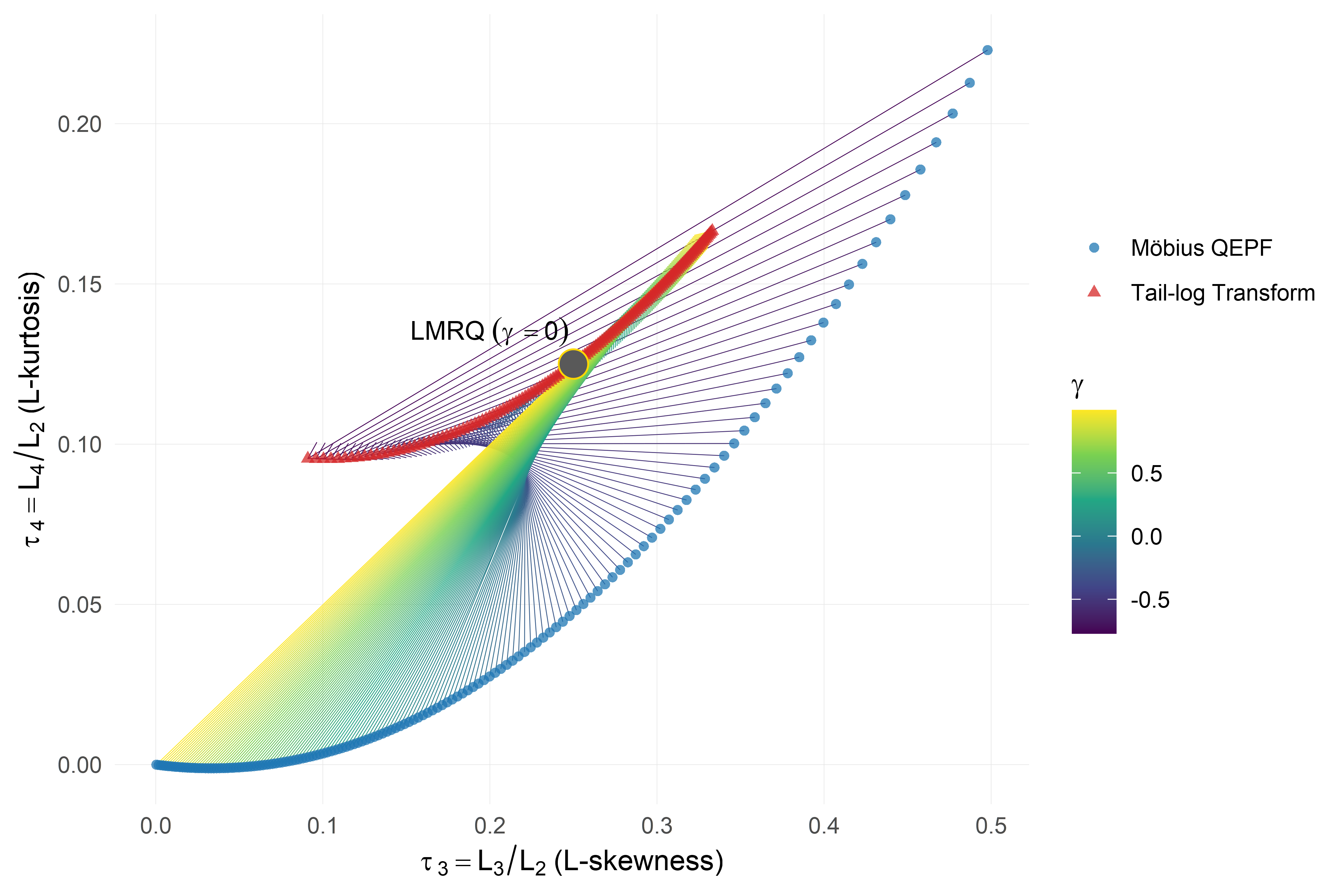}
\caption{L-moment ratio mapping $(\tau_3,\tau_4)$ comparing the
M\"obius QEPF family and its tail--log transform.}
\label{fig:tau3tau4Z}
\end{minipage}
\end{figure}

\Cref{fig:tau3tau4} displays the L-moment ratio diagram
$(\tau_3,\tau_4)$ for the proposed M\"obius QEPF family together with several
classical reference families. The proposed curve passes through the uniform
distribution, confirming the preceding observation. 
The reference families included in the diagram are summarized in
\Cref{tab:reference_lmr}. These comprise the generalized Pareto (GPA),
generalized extreme value (GEV), and generalized logistic (GLO) families,
together with the special points corresponding to the uniform, exponential,
Gumbel, logistic, and normal distributions. We also include the Weibull and
LMRQ families for comparison.

\medskip
The M\"obius QEPF family therefore traces a smooth one-parameter curve in the $(\tau_3,\tau_4)$ plane. \Cref{fig:tau3tau4Z} shows the corresponding L-moment ratio mapping for the tail--log transformed family. Each arrow represents the image of a common shape parameter $\gamma$, mapping the point $(\tau_3,\tau_4)$ of the original M\"obius QEPF family to the corresponding point of the transformed family. The transformation shifts the family toward smaller values of L-skewness and into a narrower region of the $(\tau_3,\tau_4)$ plane.


\begin{remark}
The point $\gamma=0$ is of particular interest. In this boundary case, the
transformed model yields a specific inverse-linear hazard--quantile / linear-MRQ
member within the broader class studied by \citet{MidhuSankaranNair2013}. Thus
\Cref{fig:tau3tau4Z} provides a graphical link between the proposed M\"obius
QEPF family and earlier linear mean residual quantile function constructions.
\end{remark}

\begin{table}[ht]
\centering
\footnotesize
\renewcommand{\arraystretch}{1.1}
\caption{L-moment ratio for certain quantile families.}
\label{tab:reference_lmr}
\begin{tabular}{
 l
 >{\raggedright\arraybackslash}p{5.2cm}
 l
 l
}
\hline
\textbf{Distribution} & \textbf{Family form $Q(u)$} &
\textbf{$\tau_3$} & \textbf{$\tau_4$} \\
\hline

Uniform &
$Q(u)=a+(\beta-a)u$ &
$0$ &
$0$ \\[6pt]

Exponential &
$Q(u)=\xi-\alpha\log(1-u)$ &
$\dfrac{1}{3}$ &
$\dfrac{1}{6}$ \\[10pt]

Gumbel &
$Q(u)=\xi-\alpha\log\!\bigl(-\log u\bigr)$ &
$0.1699$ &
$0.1504$ \\[6pt]

Logistic &
$Q(u)=\xi+\alpha\log\!\left(\dfrac{u}{1-u}\right)$ &
$0$ &
$\dfrac{1}{6}$ \\[12pt]

Normal\textsuperscript{a} &
$Q(u)=\xi+\alpha\,\Phi^{-1}(u)$ &
$0$ &
$0.1226$ \\[10pt]

GPA &
$Q(u)=\xi+\dfrac{\alpha}{k}\bigl[1-(1-u)^k\bigr]$ &
$\dfrac{1-k}{3+k}$ &
$\dfrac{(1-k)(2-k)}{(3+k)(4+k)}$ \\[14pt]

GEV &
$Q(u)=\xi+\dfrac{\alpha}{k}\bigl[1-(-\log u)^k\bigr]$ &
$\dfrac{2(1-3^{-k})}{1-2^{-k}}-3$ &
$\dfrac{1-6(2^{-k})+10(3^{-k})-5(4^{-k})}{1-2^{-k}}$ \\[16pt]

Weibull &
$Q(u)=\mu+\sigma\{-\log(1-u)\}^{1/\beta}$ &
$-\tau_3^{\mathrm{GEV}},\ k=1/\beta$ &
$\tau_4^{\mathrm{GEV}},\ k=1/\beta$ \\[16pt]

GLO &
$Q(u)=\xi+\dfrac{\alpha}{k}\left[1-\left(\dfrac{1-u}{u}\right)^k\right]$ &
$-k$ &
$\dfrac{1+5k^2}{6}$ \\[8pt]\\

LMRQ &
$Q(u)=-(c+\mu)\log(1-u)-2cu$ &
$\dfrac{c+\mu}{c+3\mu}$ &
$\dfrac{c+\mu}{2c+6\mu}$ \\[16pt]
\hline
\end{tabular}

\medskip
{\raggedright
\footnotesize
\emph{Note.}
The classical benchmark families follow the L-moment ratio diagram of
\citet{Hosking1990}. The LMRQ quantile form and its L-moment ratio relations
follow \citet{MidhuSankaranNair2013}.
\textsuperscript{a} $\Phi$ denotes the standard normal distribution function.
\par}
\end{table}

\section{Inference and estimation}
\label{sec:inference}

The proposed family is specified directly through its quantile function rather
than through a closed-form density or distribution function. This places
likelihood-based inference naturally in the quantile domain, where the model is
defined. In particular, likelihood evaluation proceeds through the quantile
density, following the general quantile-based
modelling framework described in \citet[Chapter~9]{Gilchrist2000}.


Let
\begin{equation*}
\theta=(\mu,\gamma), \qquad \mu>0,\ \gamma>-1,
\end{equation*}

and let
\begin{equation*}
q^{*}(u;\theta)=\frac{\partial Q^{*}(u;\theta)}{\partial u}, \qquad 0<u<1.
\end{equation*}

\noindent 
For an observation $x$ from a continuous distribution with quantile function $Q^{*}(\cdot;\theta)$, define the corresponding quantile level $u$ implicitly by
\begin{equation*}
Q^{*}(u;\theta)=x, \qquad 0<u<1.
\end{equation*}
Using the identity
\begin{equation*}
f(Q^{*}(u;\theta);\theta)=\frac{1}{q^{*}(u;\theta)},
\end{equation*}
the likelihood can be expressed in terms of the implied quantile levels.
 
\noindent
Let $x_1,\ldots,x_n$ be fully observed, and let $u_i$ satisfy
\begin{equation*}
Q^{*}(u_i;\theta)=x_i, \qquad i=1,\ldots,n.
\end{equation*}
Then
\begin{equation*}
L(\theta)=\prod_{i=1}^n \frac{1}{q^{*}(u_i;\theta)},
\end{equation*}
so the log-likelihood is
\begin{equation}
\ell(\theta)
= -\sum_{i=1}^n \log q^{*}(u_i;\theta).
\label{eq:loglik_complete}
\end{equation}
 
\noindent
For the proposed family, the upper endpoint is
\begin{equation*}
\omega(\theta)=Q^{*}(1^-;\theta).
\end{equation*}
Hence any parameter value such that
\begin{equation*}
\omega(\theta)\le \max_{1\le i\le n} x_i
\end{equation*}
is infeasible, since $Q^{*}(u_i;\theta)=x_i$ then has no solution in $(0,1)$ for at least one observation. In numerical optimization, such parameter values are excluded or assigned log-likelihood $-\infty$.
 
\noindent
For survival data, let $(t_i,\delta_i)$ denote the observed time and censoring indicator, where $\delta_i=1$ for an event and $\delta_i=0$ for a right-censored observation. Let $u_i$ satisfy
\begin{equation*}
Q^{*}(u_i;\theta)=t_i.
\end{equation*}
Since
\begin{equation*}
S(t_i;\theta)=1-F(t_i;\theta)=1-u_i,
\end{equation*}
the quantile-domain log-likelihood is
\begin{equation}
\ell(\theta)
=
\sum_{i=1}^n
\left[
-\delta_i \log q^{*}(u_i;\theta)
+
(1-\delta_i)\log(1-u_i)
\right].
\label{eq:loglik_censored}
\end{equation}
This is the natural censored-data extension of the quantile-domain likelihood under noninformative censoring.

\noindent
To enforce the constraints $\mu>0$ and $\gamma>-1$, optimization is performed on an unconstrained scale via
\begin{equation*}
\mu=\exp(\theta_1), \qquad \gamma=\exp(\theta_2)-1.
\end{equation*}
For each trial value of $\theta$, the implied parameters are substituted into $Q^{*}(u;\theta)$, the quantile levels are obtained by one-dimensional root finding, and the objective function in \eqref{eq:loglik_complete} or \eqref{eq:loglik_censored} is evaluated.
 
\noindent
For all admissible $(\mu,\gamma)$, the quantile function is strictly increasing on $(0,1)$. Therefore, whenever an observation lies within the model support, the equation
\begin{equation*}
Q^{*}(u_i;\theta)=x_i
\end{equation*}
has a unique solution in $(0,1)$, so numerical inversion is well defined.

\noindent
Competing parametric models may be compared using log-likelihood, AIC, and BIC. In the present paper, these criteria are used to compare the proposed quantile-defined model with conventional parametric alternatives.

\section{Data analysis}
\label{sec:data-analysis}

We illustrate the proposed model using the \emph{Cirrhosis Patient Survival
Prediction} dataset from the UCI Machine Learning Repository, derived from the
Mayo Clinic primary biliary cirrhosis (PBC) study conducted between 1974 and
1984 \citep{UCI_Cirrhosis,Dickson1989PBC}. The dataset records follow-up time
(in days) from study entry to the earliest of death, liver transplantation, or
administrative end of follow-up. For details of the original study and data
collection, see \citep{Dickson1989PBC,MayoPBC}.

The dataset contains 418 patients in total, including 312 randomized trial
participants and 106 additional non-randomized patients followed for survival.
In the full cohort, there were 161 deaths, 25 liver transplantations, and 232
other censored observations. Among the randomized patients, 158 received
D-penicillamine and 154 received placebo. 

The variable \texttt{N\_Days} denotes follow-up time, and the variable
\texttt{Status} takes values \texttt{D} for death, \texttt{C} for censoring,
and \texttt{CL} for censoring due to liver transplantation. Death is treated as
the event of interest, while both \texttt{C} and \texttt{CL} are treated as
right censoring. Thus, the analysis is a cause-specific analysis of time to
death.

Analyses are carried out for the overall cohort and separately by randomized
treatment assignment (\texttt{Drug}: D-penicillamine versus placebo).

\paragraph*{Models and estimation.}
The proposed M\"obius QEPF model is fitted under right censoring by maximum
likelihood using \eqref{eq:loglik_censored}, with likelihood
evaluation based on numerical inversion of the quantile function. For comparison, standard parametric survival models are also fitted using intercept-only specifications: Weibull, lognormal, and
loglogistic models via \texttt{survreg} in the \texttt{survival} package
\citep{TherneauGrambsch2000}, and Gamma and generalized Gamma models via
\texttt{flexsurv} \citep{Jackson2016Flexsurv}. Models are compared using
log-likelihood, Akaike information criterion (AIC), and Bayesian information
criterion (BIC) \citep{Akaike1974,Schwarz1978}.

\paragraph*{Overall and treatment-stratified model comparison}

\begin{table}[ht]
\centering
\small
\caption{Model comparison for the PBC dataset overall and by
treatment arm.}
\label{tab:pbc_modelcomp_overall_drug}
\begin{tabular}{llrrrr}
\toprule
Cohort & Model & logLik & $k$ & AIC & BIC \\
\midrule
\multicolumn{6}{l}{\textbf{Overall}}\\
\quad M\"obius QEPF     &  & $-1530.4646$ & 2 & $3064.9291$ & $3073.0001$ \\
\quad Weibull          &  & $-1531.0174$ & 2 & $3066.0349$ & $3074.1058$ \\
\quad Gamma            &  & $-1531.0736$ & 2 & $3066.1471$ & $3074.2181$ \\
\quad Gen.\ Gamma      &  & $-1530.8413$ & 3 & $3067.6825$ & $3079.7890$ \\
\quad Loglogistic      &  & $-1532.7886$ & 2 & $3069.5773$ & $3077.6482$ \\
\quad Lognormal        &  & $-1535.8606$ & 2 & $3075.7211$ & $3083.7921$ \\
\addlinespace
\multicolumn{6}{l}{\textbf{D-penicillamine}}\\
\quad M\"obius QEPF     &  & $-615.8941$  & 2 & $1235.7883$ & $1241.9135$ \\
\quad Weibull          &  & $-615.7392$  & 2 & $1235.4784$ & $1241.6036$ \\
\quad Gamma            &  & $-615.8766$  & 2 & $1235.7531$ & $1241.8783$ \\
\quad Gen.\ Gamma      &  & $-615.6272$  & 3 & $1237.2543$ & $1246.4421$ \\
\quad Loglogistic      &  & $-616.6065$  & 2 & $1237.2130$ & $1243.3382$ \\
\quad Lognormal        &  & $-619.0836$  & 2 & $1242.1673$ & $1248.2925$ \\
\addlinespace
\multicolumn{6}{l}{\textbf{Placebo}}\\
\quad M\"obius QEPF     &  & $-571.8792$  & 2 & $1147.7584$ & $1153.8323$ \\
\quad Weibull          &  & $-572.4588$  & 2 & $1148.9176$ & $1154.9915$ \\
\quad Gamma            &  & $-572.4712$  & 2 & $1148.9424$ & $1155.0163$ \\
\quad Gen.\ Gamma      &  & $-571.8973$  & 3 & $1149.7945$ & $1158.9054$ \\
\quad Loglogistic      &  & $-573.3288$  & 2 & $1150.6576$ & $1156.7315$ \\
\quad Lognormal        &  & $-573.6784$  & 2 & $1151.3569$ & $1157.4308$ \\
\bottomrule
\end{tabular}
\end{table}

\Cref{tab:pbc_modelcomp_overall_drug} shows that the proposed M\"obius QEPF
model is competitive with standard parametric alternatives both overall and
within each treatment arm. It gives the smallest AIC and BIC for the overall
cohort and for the placebo arm, while in the D-penicillamine arm the Weibull
model is slightly preferred, though the difference is negligible. Thus the main
conclusion is one of comparable fit under a parsimonious two-parameter
specification rather than uniform dominance.

For the proposed model, the fitted parameters are
\begin{equation*}
\begin{aligned}
\text{Overall: }\ &\widehat{\mu}=3519.518,\ \widehat{\gamma}=0.549,\\
\text{D-penicillamine: }\ &\widehat{\mu}=3260.685,\ \widehat{\gamma}=0.930,\\
\text{Placebo: }\ &\widehat{\mu}=3624.512,\ \widehat{\gamma}=0.523.
\end{aligned}
\end{equation*}
In all three fits, $\widehat{\gamma}>1/2$, implying that the fitted
hazard--quantile function $H^{*}(u)$ is strictly increasing on $(0,1)$
(see \Cref{subsec:hazard_quantile}).

\Cref{fig:pbc_km_two_arms} displays the Kaplan--Meier curves together with the
fitted M\"obius QEPF, Weibull, and Gamma survival curves for the two treatment
arms. The fitted M\"obius curves track the nonparametric estimates reasonably
well across the observation range and remain close to the competing
models, which is consistent with the likelihood-based comparison in
\Cref{tab:pbc_modelcomp_overall_drug}.

\begin{figure}[ht]
\centering
\begin{tikzpicture}

\begin{axis}[
  name=KMaxis,
  width=0.75\textwidth,
  height=0.5\textwidth,
  scale only axis,
  xlabel={Time (days)},
  xlabel style={yshift=6pt},
  ylabel={Survival probability},
  ymin=0.3, ymax=1,
  xmin=0,
  ymajorgrids=true,
  grid style={black!10},
  axis line style={black!70},
  tick style={black!70},
  ticklabel style={font=\small},
  label style={font=\small},
  title style={font=\small},
  legend to name=legKMoverlay,
  legend columns=2,
  legend style={
    draw=none, fill=none, font=\small,
    /tikz/every even column/.append style={column sep=10pt},
    row sep=2pt,
  },
  legend image code/.code={\draw[#1] (0cm,0cm) -- (0.9cm,0cm);},
]

\addplot+[
  blue!70!black,
  const plot,
  no marks,
  line width=0.30pt
] table[col sep=comma, x=time, y=surv]{pbc_DPen_km.csv};
\addlegendentry{KM: D-penicillamine}

\addplot+[
  red!70!black,
  const plot,
  no marks,
  line width=0.30pt
] table[col sep=comma, x=time, y=surv]{pbc_Placebo_km.csv};
\addlegendentry{KM: Placebo}

\addplot+[
  blue!70!black,
  dashed,
  no marks,
  line width=0.95pt
] table[col sep=comma, x=time, y=surv]{pbc_DPen_fit_mobius.csv};
\addlegendentry{M\"obius: D-penicillamine}

\addplot+[
  red!70!black,
  dashed,
  no marks,
  line width=0.95pt
] table[col sep=comma, x=time, y=surv]{pbc_Placebo_fit_mobius.csv};
\addlegendentry{M\"obius: Placebo}

\addplot+[
  blue!70!black,
  solid,
  no marks,
  line width=0.55pt
] table[col sep=comma, x=time, y=surv]{pbc_DPen_fit_weibull.csv};
\addlegendentry{Weibull: D-penicillamine}

\addplot+[
  red!70!black,
  solid,
  no marks,
  line width=0.55pt
] table[col sep=comma, x=time, y=surv]{pbc_Placebo_fit_weibull.csv};
\addlegendentry{Weibull: Placebo}

\addplot+[
  blue!70!black,
  dotted,
  no marks,
  line width=0.55pt
] table[col sep=comma, x=time, y=surv]{pbc_DPen_fit_gamma.csv};
\addlegendentry{Gamma: D-penicillamine}

\addplot+[
  red!70!black,
  dotted,
  no marks,
  line width=0.55pt
] table[col sep=comma, x=time, y=surv]{pbc_Placebo_fit_gamma.csv};
\addlegendentry{Gamma: Placebo}

\end{axis}

\node[anchor=north] at ($(KMaxis.south west)!0.5!(KMaxis.south east)$) [yshift=-0.70cm]
{\pgfplotslegendfromname{legKMoverlay}};

\end{tikzpicture}

\caption{Kaplan--Meier estimates and fitted survival curves
(M\"obius QEPF, Weibull, and Gamma) for the PBC dataset.}
\label{fig:pbc_km_two_arms}
\end{figure}

\begin{figure}[ht]
\centering
\begin{tikzpicture}

\def\umin{0.02}
\def\umax{0.98}

\def\muDP{3260.684695}
\def\gaDP{0.929708412}

\def\muPl{3624.511638}
\def\gaPl{0.52289054}

\begin{groupplot}[
  group style={group size=2 by 1, horizontal sep=2.2cm},
  width=0.45\textwidth,
  height=0.4\textwidth,
  domain=\umin:\umax,
  samples=260,
  xlabel={$u$},
  axis line style={black!70},
  tick style={black!70},
  ticklabel style={font=\small},
  label style={font=\small},
  title style={font=\small},
  grid=major,
  grid style={black!10},
  legend style={
    font=\small,
    draw=none,
    legend columns=2,
  },
  legend image post style={mark=none},
]

\nextgroupplot[
  title={\shortstack{$H^{*}(u)$\\ Hazard--quantile function}},
  ymode=log,
  ymin=1e-4,
  legend to name=legHM,
]

\addplot+[blue!70!black, no marks, line width=0.75pt]
  { pow(1+\gaDP*x, 2-1/\gaDP) / (\muDP*(1+\gaDP)*(1-x*x)) };
\addlegendentry{D-penicillamine}

\addplot+[red!70!black, no marks, line width=0.75pt]
  { pow(1+\gaPl*x, 2-1/\gaPl) / (\muPl*(1+\gaPl)*(1-x*x)) };
\addlegendentry{Placebo}

\nextgroupplot[
  title={\shortstack{$M^{*}(u)$\\ Mean residual quantile function}},
  ymin=0,
]

\addplot+[blue!70!black, no marks, line width=0.75pt]
  { \muDP*(1-x)*pow(1+\gaDP*x, 1/\gaDP-1) };

\addplot+[red!70!black, no marks, line width=0.75pt]
  { \muPl*(1-x)*pow(1+\gaPl*x, 1/\gaPl-1) };

\end{groupplot}

\node[anchor=north] at ($(group c1r1.south)!0.5!(group c2r1.south)$) [yshift=-0.8cm]
{\pgfplotslegendfromname{legHM}};

\end{tikzpicture}
\caption{Plot of $H^{*}(u)$ and $M^{*}(u)$ of the fitted M\"obius QEPF model by treatment}
\label{fig:pbc_HM_by_treatment}
\end{figure}

The quantile-domain summaries in \Cref{fig:pbc_HM_by_treatment} complement the
survival curves by showing how the fitted hazard--quantile and residual
quantile functions evolve across the distribution. For both treatment arms,
the estimated values of $\gamma$ place the fitted model in the increasing
hazard--quantile regime.

\paragraph*{Exploratory treatment--sex stratified analysis}

To examine whether the fitted quantile-domain behaviour varies across clinically
relevant subgroups, we carried out an exploratory analysis stratified jointly
by treatment assignment and sex. These analyses are descriptive and are not
intended to support formal subgroup inference.

\begin{table}[ht]
\centering
\small
\caption{$\widehat{\gamma}$ for the M\"obius QEPF model across cohorts, with shape of $H^{*}(u)$.}
\label{tab:pbc_sex_gamma_summary}
\begin{tabular}{lll}
\toprule
Group & $\widehat{\gamma}$ & Shape of $H^{*}(u)$ \\
\midrule
Overall cohort & 0.549 & Increasing \\

D-penicillamine & 0.93 & Increasing \\
Placebo & 0.523 & Increasing \\

Female (both arms) & $\approx 0.67$--$0.87$ & Increasing \\

Male, D-penicillamine & 0.607 & Increasing \\
Male, Placebo & $-0.110$ & Bathtub-shaped \\
\bottomrule
\end{tabular}
\end{table}

\Cref{tab:pbc_sex_gamma_summary} summarizes the estimated $\widehat{\gamma}$ across the overall cohort, treatment arms, and treatment--sex strata, together with the corresponding shape of $H^{*}(u)$ established in \Cref{subsec:hazard_quantile}. Specifically, the theory implies that $H^{*}(u)$ is strictly increasing when $\gamma\ge 1/2$ and bathtub-shaped when
$-1<\gamma<1/2$.

Across the overall cohort and most treatment--sex strata, the estimated
$\widehat{\gamma}$ lies in the regime $\widehat{\gamma}\ge 1/2$, corresponding
to an increasing hazard--quantile function. In contrast, the male placebo
subgroup falls in the bathtub-shaped regime. Since the male subgroups involve
substantially fewer events than their female counterparts, these results should
be interpreted cautiously and descriptively rather than inferentially.

\section{Conclusion}

In this paper, we studied a rational (M\"obius) specification of the
quantile-based effectiveness persistence function and showed that, under the
conditions $Q(0)=0$, finite mean, and finite upper endpoint, it reduces to a
canonical boundary-aligned form. This leads to a two-parameter family of
non-negative distributions defined directly through its quantile function.

For the proposed family, several quantile-domain quantities admit
closed-form expressions, including the tail mean, mean residual quantile function,
hazard--quantile function, Lorenz curve, total-time-on-test transform, and
L-moment ratios. The model also admits characterization results that clarify
its internal structure and connect it to earlier quantile-based reliability
classes through a tail--log transformation.

The data analyses indicate that the proposed family can provide competitive
likelihood-based fit relative to standard parametric alternatives in settings
where its structural assumptions are reasonable. The model is particularly
attractive when direct interpretation through upper-tail persistence and related
quantile-domain measures is of substantive interest. At the same time, the model imposes a finite upper endpoint, which should be considered when applying it to time-to-event or other settings where unbounded support may be more natural.

Several extensions remain of interest. These include a more systematic study of
the special parameter values appearing in the L-moment formulas, simulation-based
assessment of finite-sample inferential performance, regression extensions, and
further work on diagnostic tools for quantile-domain model adequacy. Such
developments would help clarify both the practical scope and the limitations of
the proposed M\"obius QEPF family.

\section*{Declarations}
\begin{itemize}[leftmargin=*]
\item \textbf{Funding:} None.
\item \textbf{Conflict of interest:} The authors declare no conflict of interest.
\item \textbf{Data availability:} The data used in the illustrative analysis are publicly available from the sources cited in the manuscript.
\item \textbf{Author contributions:} All authors contributed to the conception of the work, reviewed the manuscript, and approved the final version.
\end{itemize}

\bibliographystyle{plainnat}
\bibliography{bi_qepf_ref}
\end{document}